\documentclass[a4paper, twocolumn, superscriptaddress, showkeys, floatfix,11pt,tightenlines]{quantumarticle}
\pdfoutput=1
\usepackage{silence}
\usepackage[utf8]{inputenc}
\usepackage[english]{babel}
\usepackage[T1]{fontenc}
\usepackage{array}
\usepackage{multirow}
\usepackage{dcolumn}
\usepackage{hyperref}
\usepackage[pdftex]{graphicx}
\usepackage[normalem]{ulem}
\usepackage{url}
\usepackage{braket}
\usepackage{here}
\usepackage{amsmath}
\usepackage{amsfonts}
\usepackage{euscript}
\usepackage{mathrsfs}
\usepackage{amssymb}
\usepackage{amsthm}
\usepackage{mathcomp}
\usepackage{scalefnt}
\usepackage{dcolumn}
\usepackage{float}
\usepackage[ruled,linesnumbered,noend]{algorithm2e}
\usepackage[FIGTOPCAP,hang,nooneline]{subfigure}
\usepackage{cite}
\usepackage{tikz}
\usetikzlibrary{decorations.pathreplacing}
\usetikzlibrary{trees,quotes,angles}
\usepackage{xcolor}
\usepackage{caption}
\usepackage{subcaption}
\usepackage{tabularray}
\usepackage{amsthm}
\newtheorem{proposition}{Proposition}

\newcommand{\ctext}[1]{\raise0.2ex\hbox{\textcircled{\scriptsize{#1}}}}

\begin{document}
\title{Efficient Preparation of Graph States using the Quotient-Augmented Strong Split Tree}

\author{Nicholas Connolly$^*$}
\affiliation{Okinawa Institute of Science and Technology Graduate University, Onna-son, Kunigami-gun, Okinawa, 904-0495, Japan}
\email{nicholas.connolly@oist.jp}

\author{Shin Nishio$^*$}
\affiliation{Graduate School of Science and Technology, Keio University, Yokohama, Kanagawa, 223-8522, Japan}
\affiliation{Department of Physics \& Astronomy, University College London, London, WC1E 6BT, United Kingdom}
\email{shin.nishio@keio.jp}

\author{Dan E. Browne}
\affiliation{Department of Physics \& Astronomy, University College London, London, WC1E 6BT, United Kingdom}
\email{d.browne@ucl.ac.uk}

\author{William John Munro} 
\email{bill.munro@oist.jp}
\affiliation{Okinawa Institute of Science and Technology Graduate University, Onna-son, Kunigami-gun, Okinawa, 904-0495, Japan}
\affiliation{National Institute of Informatics, 2-1-2 Hitotsubashi, Chiyoda-ku, Tokyo, 101-8430, Japan} 
\author{Kae Nemoto} 
\email{kae.nemoto@oist.jp}
\affiliation{Okinawa Institute of Science and Technology Graduate University, Onna-son, Kunigami-gun, Okinawa, 904-0495, Japan}
\affiliation{National Institute of Informatics, 2-1-2 Hitotsubashi, Chiyoda-ku, Tokyo, 101-8430, Japan} 
\begin{flushleft}
\def\thefootnote{*}\footnotetext{These authors contributed equally to this work.}\def\thefootnote{\arabic{footnote}}
\end{flushleft}
\maketitle
\begin{abstract}
Graph states are a key resource for measurement-based quantum computation and quantum networking, but state-preparation costs limit their practical use. 
Graph states related by local complement (LC) operations are equivalent up to single-qubit Clifford gates; one may reduce entangling resources by preparing a favorable LC-equivalent representative. However, exhaustive optimization over the LC orbit is not scalable.
We address this problem using the split decomposition and its quotient-augmented strong split tree (QASST). For several families of distance-hereditary (DH) graphs, we use the QASST to characterize LC orbits and identify representatives with reduced controlled-Z count or preparation circuit depth. We also introduce a split-fuse construction for arbitrary DH graph states, achieving linear scaling with respect to entangling gates, time steps, and auxiliary qubits. Beyond the DH setting, we discuss a generalized divide-and-conquer split-fuse strategy and a simple greedy heuristic for generic graphs based on triangle enumeration. Together, these methods outperform direct implementations on sufficiently large graphs, providing a scalable alternative to brute-force optimization. 
\end{abstract}


\section{Introduction}
\label{sect:introduction}
Graph states are a fundamental resource in quantum information processing, with prominent applications in measurement-based quantum computation (MBQC), quantum error correction, and photonic quantum networks~\cite{raussendorf2001one,hein2004multiparty}. They also provide a convenient graphical representation of the entanglement structure underlying MBQC protocols~\cite{hein2004multiparty}.
As experimental platforms continue to scale up in size, the efficient preparation of large graph states has become a central practical challenge.
Preparation costs are typically quantified by the number of two-qubit entangling gates, usually controlled-Z (CZ) gates, and by the circuit depth required to implement them.

A key structural feature of graph states is their equivalence under \emph{local complement} (LC) operations ~\cite{van2004graphical,van2004efficient,hein2006entanglement}.
Since LC transformations can be implemented using only local Clifford gates, which are typically far less costly than entangling gates, it is often advantageous to prepare an LC-equivalent representative of a target graph state that minimizes preparation resources, and subsequently transform it into the desired state using local operations.
This observation has motivated extensive work on identifying optimal representatives within LC equivalence classes of graph states~\cite{adcock2020mapping,cabello2011optimal,sharma2026minimising}.
Typically, this is done by examining the \textit{LC orbit} of the corresponding graph; that is, the equivalence class of graphs related by LC operations, a type of graph transformation.
Hence, LC orbits provide a convenient way of enumerating equivalent graph states.

The principal difficulty in this approach is that LC orbits grow rapidly with the number of qubits.
Exhaustive enumeration is feasible only for small graph states, and even determining the size of an LC orbit for an arbitrary graph state is a computationally hard problem~\cite{dahlberg2020counting}.
While numerical methods have been used to classify LC orbits and identify optimal representatives for graph states with up to 12 qubits~\cite{adcock2020mapping,cabello2011optimal}, such approaches do not scale to larger systems relevant for practical cases.
This suggests that scalable optimization should avoid brute-force orbit searches and instead exploit structural features of the graph itself. Recent work has similarly shown that specific graph-state families can exhibit strong local-equivalence structure; for example, circle graph states were recently shown to be closed under $r$-local complement~\cite{hahn2026structurecirclegraphstates}.

In this work, we take this structural approach precisely. Our starting point is the split decomposition of a graph ~\cite{cunningham1982decomposition,cunningham1980combinatorial} and its representation as a \emph{quotient-augmented strong split tree} (QASST)~\cite{connolly2026local}. The QASST organizes a graph into smaller quotient graphs connected by an underlying tree structure called a strong split tree~\cite{gioan2007dynamic,gioan2012split}, while retaining enough information to reconstruct the original graph. This viewpoint is useful for two reasons. First, it provides a compact way to describe large portions of an LC orbit in terms of locally equivalent quotient graphs rather than whole graphs~\cite{connolly2026local}. Second, it naturally suggests a preparation strategy in which one prepares smaller intermediate graph states associated with the quotient graphs and then combines them using fusion operations~\cite{browne2005resource}. In this sense, the QASST serves as a common framework for both orbit analysis and graph-state preparation.

Using this structure, we develop two complementary methods for optimizing graph-state preparation. We focus first on \emph{distance-hereditary} (DH) graphs, a class that is especially well-suited to this framework. DH graphs are closed under local complementation~\cite{bouchet1988transforming}, and their split decomposition contains only star and complete quotient graphs~\cite{cunningham1980combinatorial}, making the QASST particularly tractable. Moreover, the DH class includes several graph families relevant to quantum communication and networking, such as complete multipartite graphs, clique-stars, and repeater graphs~\cite{azuma2015all}. First, for several structured DH families, we analytically characterize their LC orbits using the QASST description and identify representatives that minimize either the number of CZ gates or the maximum vertex degree, which bounds the minimum preparation depth~\cite{cabello2011optimal,vizing1964estimate}. This yields provably optimal constructions for complete bipartite graphs, complete multipartite graphs, clique-stars, and related repeater-type graphs without requiring numerical searches over the full LC orbit~\cite{connolly2026local}.
Second, for arbitrary DH graph states, we introduce a split-fuse preparation method based on the split decomposition, which can be efficiently computed~\cite{charbit2012linear,dahlhaus1994efficient,dahlhaus2000parallel} in linear time, and Type-II fusion operations~\cite{browne2005resource}. In this construction, the target graph state is assembled from quotient graph states specified by the QASST. Because all quotient graphs of a DH graph may be initialized as stars, the resulting protocol achieves linear scaling in the number of CZ gates, time steps, and auxiliary qubits with the system size.

We also investigate how this structural perspective extends beyond the DH setting. For generic graphs, the split decomposition still exists, but prime quotient graphs appear and prevent the same fully analytic treatment. This leads to a generalized split-fuse strategy in which star and complete quotient graphs are prepared in optimized form, while prime quotient graphs are handled either directly or with a simple greedy heuristic based on local complement. We propose a simple greedy heuristic that reduces edge count without requiring special structural assumptions on the input graph based on triangle enumeration, which is more efficient than searching the orbit.  

\section{Review of Graphs and Graph States}
\label{sect:review_of_graphs_and_graph_states}

In this section, we open with a discussion of distance-hereditary graphs, an important class of graphs for which we will propose a tailored graph state preparation protocol. For a general overview of the basics and notation of graphs, see Appendix~\ref{app:graph_basic}.
Then we formally define graph \emph{local complement}, an operation on graphs that can be used to describe certain local Clifford operations on graph states.
Following this, we introduce the \emph{quotient-augmented strong split tree} (QASST), our primary tool for characterizing graphs.
Finally, we introduce graph states, describing their preparation and various graph state operations, concluding with a discussion of graph state optimization. 

\begin{figure*}[t]
\begin{center}
\begin{tabular}{ccccccc}
\includegraphics[width=0.12\textwidth,page=1]{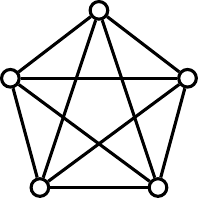}&\includegraphics[width=0.12\textwidth,page=2]{Figures/Section_Review.pdf}&\includegraphics[width=0.11\textwidth,page=3]{Figures/Section_Review.pdf}&\includegraphics[width=0.11\textwidth,page=4]{Figures/Section_Review.pdf}&\includegraphics[width=0.11\textwidth,page=5]{Figures/Section_Review.pdf}&\includegraphics[width=0.11\textwidth,page=6]{Figures/Section_Review.pdf}&\includegraphics[width=0.11\textwidth,page=7]{Figures/Section_Review.pdf}\\
$K_5$&$S_5$&$K_{3,2}$&$K_{2,2,2,2}$&$CS^1_{2,2,2,2}$&$R_4$&$MR_{3,3,3,3}$
\end{tabular}
\end{center}
\caption{
Examples of the special families of graphs we characterize, all of which are distance-hereditary.
}
\label{fig:graph_examples}
\end{figure*}

\begin{figure*}[t]
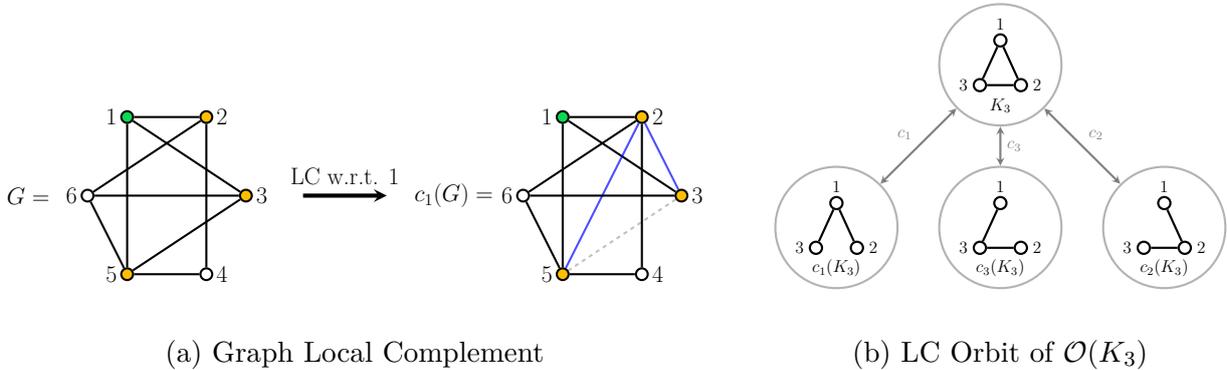

\centering
\begin{tabular}{ccc}
\includegraphics[width=0.55\linewidth,page=8]{Figures/Section_Review.pdf}&&\includegraphics[width=0.35\linewidth,page=9]{Figures/Section_Review.pdf}\\
\\
(a) Graph Local Complement&&(b) LC Orbit of $\mathcal{O}(K_3)$
\end{tabular}
\caption{
(a) The effect of local complement on a graph $G$ with respect to vertex $1\in V(G)$ (green vertex). The edges between the neighbors of 1 (yellow vertices) are complemented: existing edges are deleted, and missing edges are added.
(b) The LC orbit of the complete graph on three vertices, locally equivalent to a star with 2 spokes.
}
\label{fig:local_complement_example}
\end{figure*}

\subsection{Distance Hereditary Graphs}

A graph is called \textit{distance-hereditary}~\cite{howorka1977characterization} (DH) if all connected induced subgraphs preserve distance between vertices.
DH graphs have been characterized in a few different ways, including those graphs which can be constructed recursively via twins~\cite{bandelt1986distance}; completely separable graphs~\cite{hammer1990completely}; totally decomposable graphs~\cite{cunningham1982decomposition}; and graphs of rank width 1\cite{oum2005rank}.
Each of these characterizations is equivalent and will be useful for examining the LC orbit of DH graphs in Section~\ref{sect:classifying_LC_oribts}.

DH graphs play an important role in network communication theory, where the distance-hereditary property is closely related to the robustness of a network against vertex loss.
There are many types of distance-hereditary graph, but we are particularly interested in the special families illustrated in Figure~\ref{fig:graph_examples}.
These include: \textit{complete}, \textit{star}, \textit{complete bipartite}, \textit{complete-multipartite}, \textit{clique-star}, \textit{repeater}, and \textit{multi-leaf repeater}.
Many of these graphs, such as repeaters~\cite{azuma2015all}, are already used in quantum protocols.
Furthermore, these graphs have particularly simple descriptions thanks to their rich symmetries.
The formal definitions for these special graphs are included in Appendix~\ref{app:special_graph_classes}.

\subsection{Graph Local Complement}
\label{sect:graph_local_complement}

Certain operations on graph states are modeled by corresponding operations on graphs.
In particular, we are concerned with \emph{local Clifford} operations, a type of single-qubit operation which modifies the connectivity of a graph state while preserving the entanglement.
\emph{Local Clifford} operations refer to those single-qubit operations obtainable by applying Hadamard, phase, or Pauli gates on individual qubits.
The graph operation known as \emph{local complement} on graphs can be implemented on graph states with Pauli gates, which are a subset of local Clifford operations~\cite{van2004graphical,van2004efficient,hein2006entanglement}.

\begin{figure*}[t]
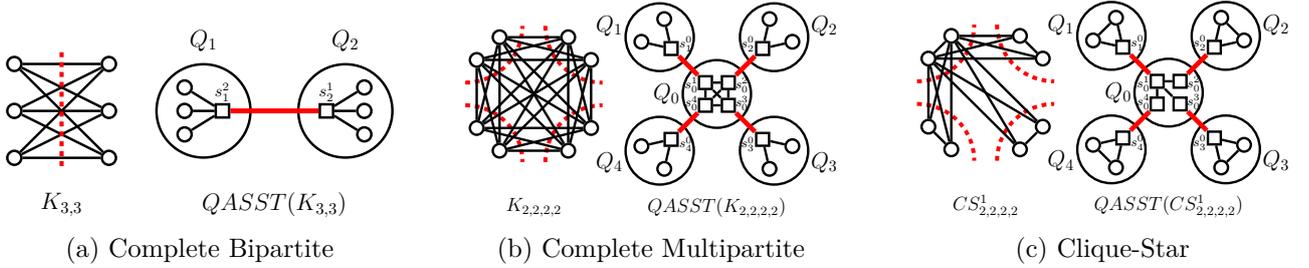

\centering
\small
\begin{tabular}{ccccc}
\includegraphics[width=0.3\linewidth,page=13]{Figures/Section_Review.pdf}&&\includegraphics[width=0.3\linewidth,page=14]{Figures/Section_Review.pdf}&&\includegraphics[width=0.3\linewidth,page=15]{Figures/Section_Review.pdf}\\
(a) Complete Bipartite&&(b) Complete Multipartite&&(c) Clique-Star 
\end{tabular}
\normalsize
\caption{
The split decompositions for the three special families of DH graph that we consider. 
$K_{n,m}$ splits into two quotient graphs, while $K_{n_1,\cdots,n_k}$ and $CS^r_{n_1,\cdots,n_k}$ always split into $k+1$ quotient graphs. We adopt the convention of labeling the central quotient graph $Q_0$ and the others $Q_1,\cdots,Q_k$ matching the index of the vertex group.
The LC orbits of these graphs have been fully classified based on the symmetries of quotient graphs in the QASST.}
\label{fig:special_graph_QASSTs}
\end{figure*}

Originally introduced by Bouchet~\cite{bouchet1993recognizing,bouchet1991efficient,bouchet1988graphic}, \emph{local complement} is a graph operation defined with respect to a single vertex that induces the following transformation.
Given a graph $G$ and a vertex $v\in V(G)$, the \emph{local complement of $G$ with respect to $v$} is the graph $c_v(G)$ obtained by replacing the neighborhood of $v$ with its edge complement (Figure~\ref{fig:local_complement_example}).
Note that this operation is self-inverse: a second application of local complement with respect to the same vertex recovers the original graph.
Bouchet studied local complements extensively in his research and used them to characterize a number of interesting graph families, including trees (graphs containing no cycles)~\cite{bouchet1988transforming} and circle graphs (graphs defined via the chords in a circle)~\cite{bouchet1987reducing}.

Let ${\mathcal G}_n$ denote the set of connected, simple, labeled graphs with $n$ vertices. Local complement can be regarded as a map from ${\mathcal G}_n$ to itself, and so we introduce the function notation $c_v:{\mathcal G}_n\rightarrow{\mathcal G}_n$ to denote the operation of local complement on a graph $G\in{\mathcal G}_n$ with respect to a vertex $v\in V(G)$.
Two graphs which are related by a sequence of zero or more local complements are said to be \textit{locally equivalent} or \textit{LC-equivalent}, and we will write $G\cong_{\text{LC}}G'$ for two graphs related in this way.
In this way, local complement defines an \textit{equivalence relation} on ${\mathcal G}_n$ (i.e.~a relation which is reflexive, symmetric, and transitive).
The equivalence classes with respect to this relation are called \textit{LC orbits}, and these form a partition of ${\mathcal G}_n$.
For a given graph $G$, let ${\mathcal O}(G)$ represent its LC orbit.
Note that $G\cong_{\text{LC}}G'$ if and only if ${\mathcal O}(G)=\mathcal{O}(G')$.
Figure~\ref{fig:local_complement_example} shows a simple example of the LC orbit of the complete graph on three vertices.

\begin{figure*}[htbp]
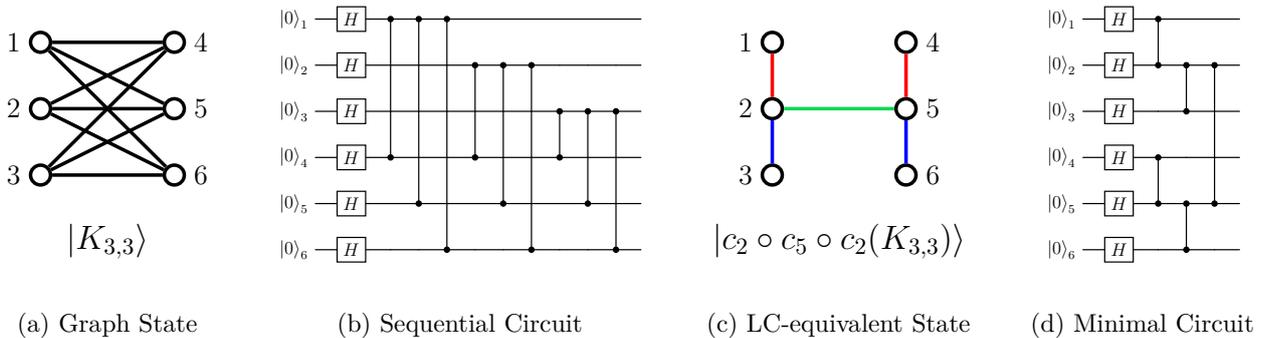

\centering
\scalebox{0.88}{
\begin{tabular}{ccccccc}
\includegraphics[scale=1,page=20]{Figures/Section_Review.pdf}&&\includegraphics[scale=0.65,page=21]{Figures/Section_Review.pdf}&&\includegraphics[scale=1,page=22]{Figures/Section_Review.pdf}&&\includegraphics[scale=0.65,page=23]{Figures/Section_Review.pdf}\\
\\
(a) Graph State&&(b) Sequential Circuit&&(c) LC-equivalent State&&(d) Minimal Circuit
\end{tabular}
}
\caption{
An example of two LC-equivalent graph states and their preparation circuits. Vertices denote qubits prepared in the $|+\rangle=H|0\rangle$ state, and edges denote entanglement via a controlled-Z (CZ) gate.
(a) The graph state corresponding to the complete bipartite graph $K_{3,3}$.
(b) The preparation circuit for $|K_{3,3}\rangle$ with CZ operations ordered sequentially.
(c) The graph state for a binary-star, locally equivalent to $|K_{3,3}\rangle$ via an edge-pivot. This graph has chromatic index $\chi'(G)=3$, and edges are colored according to a minimal edge-coloring.
(d) A minimal-depth preparation circuit corresponding to this edge-coloring, wherein edges of the same color correspond to CZ operations that are performed simultaneously.
This is an optimal graph state locally equivalent to $|K_{3,3}\rangle$ (non-unique).
}
\label{fig:graph_coloring_schedule}
\end{figure*}

\subsection{Split Decompositions and the QASST}
\label{sect:split_decomposition}

A \textit{split} of a graph $G$ is a bipartition of the vertex set $V(G)=U_1\sqcup U_2$ such that the edges passing between $U_1$ and $U_2$ induce a complete bipartite subgraph.
A split is called trivial if one side consists of a single vertex.
A graph with no nontrivial splits is called prime; the 5-cycle is the smallest example of a prime graph.
Two splits are said to cross if each side of one split has a nonempty intersection with the sides of the other split.
A nontrivial split is called \emph{strong} if it crosses no other split; trivial splits are defined to be strong.
Figure~\ref{fig:special_graph_QASSTs} shows several examples of graphs with strong splits indicated by dashed red lines; the edges crossing these dashed lines induce complete bipartite subgraphs.

Any graph $G$ can be decomposed into a collection of \textit{quotient graphs} by collapsing all of its strong splits, a process known as the \textit{split decomposition} and originally introduced by Cunningham~\cite{cunningham1982decomposition,cunningham1980combinatorial}.
By collapsing splits into pairs of \textit{split-nodes} representing all-to-all connections between neighbors (indicated by squares in Figure~\ref{fig:special_graph_QASSTs}), this process is naturally described by a tree whose vertices are labeled by the quotient graphs of $G$. In the figure, the quotient graphs are labeled $Q_1,Q_2,\cdots$ as shown, with red edges between quotient graphs representing collapsed splits between pairs of split-nodes. The idea of representing the split decomposition by a graph-labeled tree was formalized by Gioan and Paul~\cite{gioan2007dynamic,gioan2012split} and expanded on in our previous work~\cite{connolly2026local}.
The non-split-nodes are called \textit{leaf-nodes} and are denoted by circles; these are the vertices inherited from the original graph $G$.

The split decomposition of a graph $G$ gives two pieces of information: a strong split tree whose edges are in bijection with the strong splits of $G$ (excluding trivial splits) and a list of quotient graphs $Q_1,\cdots,Q_k$ in bijection with the vertices of this tree.
Associated to each edge in the strong split tree, we also designate a pair of split-nodes corresponding to nodes $s_i^j$ and $s_j^i$ from two adjacent quotient graphs $Q_i$ and $Q_j$ in the tree. The subscript on $s_i^j$ represents the index of the quotient graph $Q_i$ which contains it, while the superscript represents the index of the quotient graph $Q_j$ belonging to its partner.
Figure~\ref{fig:special_graph_QASSTs} shows this explicitly for three examples.
We refer to this entire object as the \textit{quotient-augmented strong-split tree} of $G$, or $\textit{QASST}(G)$.

The QASST of any graph can be obtained from the split decomposition and quotient graphs, as shown for three examples of DH graphs in Figure~\ref{fig:special_graph_QASSTs}.
Because all strong splits are collapsed in the construction, all quotient graphs are guaranteed to be classified as prime, star, or complete.
The QASST contains equivalent information to the original graph, which can be reconstructed by replacing pairs of adjacent split-nodes with all-to-all connections between neighbors.
The advantage of viewing a graph in terms of its QASST is that questions about the original graph can be studied in terms of the smaller quotient graphs, potentially dividing computational algorithms into smaller problems.
We leverage this fact in Section~\ref{sect:the_split_fuse_method} to introduce a graph state preparation technique based on the QASST.

There is an important relationship between local equivalence and the split decomposition.
In particular, Bouchet proved that the strong splits in a graph are invariant under local complements (Lemma 2.1 of~\cite{bouchet1987reducing}).
This means that the split decompositions of two locally equivalent graphs are described by the same QASST, but with a different combination of quotient graphs.
We showed in~\cite{connolly2026local} that the quotient graphs of two locally equivalent graphs must themselves be locally equivalent.
This fact extends to the entire LC orbit, and hence the underlying tree structure of the QASST is an invariant of the LC equivalence class.
Based on this fact, the LC orbit may be characterized in terms of locally equivalent quotient graphs used with the QASST. 

\subsection{Graph States}
\label{sect:graph_states}

Graph states are a type of quantum state fundamental to measurement-based quantum computing~\cite{raussendorf2001one}. 
The preparation circuit for a graph state is described exactly by a simple graph~\cite{hein2004multiparty}, wherein vertices denote qubits in the $|+\rangle=H|0\rangle$ state and edges represent entanglement through a CZ gate.
We will use the notation $|G\rangle$ to denote the graph state corresponding to a graph $G$.
\begin{eqnarray}
|G\rangle&=&\prod_{(u,w)\in E(G)}CZ_{u,w}\bigotimes_{v\in V(G)}|+\rangle_v
\end{eqnarray}
Figure~\ref{fig:graph_coloring_schedule} shows an example of the preparation circuit for the graph state corresponding to $K_{3,3}$, with CZ gates ordered sequentially.

Two-qubit CZ gates are generally considered the most resource expensive operations in the preparation of a graph state.
However, single-qubit gates such as local Clifford operations are typically inexpensive by comparison.
Two graph states which are related by some sequence of local Clifford operations are said to be local Clifford (LC) equivalent, and such graph states are considered to exhibit the same entanglement.
Hence, preparing an LC-equivalent graph state with fewer CZ gates, and then transforming to a target graph state via local Clifford transformations is one practical technique for reducing resource costs.

The local Clifford operations on a graph state $|G\rangle$ corresponding to a local complement operation on the graph $G$ with respect to a vertex $v\in V(G)$ can be stated explicitly as rotations on the Bloch sphere for the qubits corresponding to $v$ and its neighbors.
Specifically, the unitary operation which transforms $|G\rangle$ into $|c_v(G)\rangle$ is given by the operation~\cite{hein2006entanglement,dahlberg2018transforming}
\begin{eqnarray}\label{eq:LC}
U^G_v&=&\exp{\left(-i\frac{\pi}{4}X_v\right)}\prod_{u\in N(v)}\exp{\left(i\frac{\pi}{4}Z_u\right)}\nonumber,\\
&&
\end{eqnarray}
wherein
\begin{eqnarray}
|c_v(G)\rangle&=&U_v^G|G\rangle
\end{eqnarray}
Similarly, Pauli measurements on the qubits in a graph state involve local Clifford operations followed by a vertex deletion. These can be described in terms of \textit{vertex minors}. A description of this relationship and explicit formulas for such measurements are summarized in Appendix~\ref{app:vertex_minors_and_measurements}.

\textit{Fusions} are another type of operation which can be used to combine two graph states into a single, larger graph state~\cite{browne2005resource}. There are two types of fusion operations, but in this paper we are interested specifically in Type-II fusions, specifically the adaptation introduced in~\cite{lee2023graph}. This Type-II fusion is implemented by a Bell-state measurement preceded by a Hadamard gate on one of the qubits, and followed by the removal of the measured qubits.
Graphically, a Type-II fusion between two qubits $q_1$ and $q_2$ from two different graph states $|G_1\rangle$ and $|G_2\rangle$ results in a new graph state $|G\rangle$, wherein each of the neighbors of $q_1$ is connected to all of the neighbors of $q_2$ (and vice versa) and then $q_1$ and $q_2$ are deleted~\cite{lee2023graph}.
A simple example of this is shown in Figure~\ref{fig:type-II_fusion}.

\begin{figure}[t]
\centering
\includegraphics[width=\linewidth,page=11]{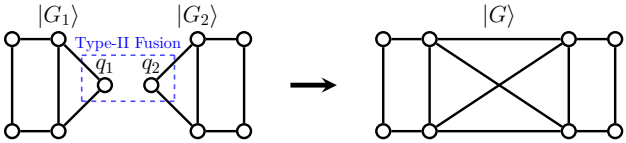}
\caption{
An example of a Type-II fusion between qubits $q_1$ and $q_2$ from two different graph states $|G_1\rangle$ and $|G_2\rangle$.
}
\label{fig:type-II_fusion}
\end{figure}

\subsection{Minimizing Preparation Resources}

If $|G\rangle$ is the graph state corresponding to a graph $G$, the resources needed to implement the preparation circuit for $|G\rangle$ (i.~e. the number of qubits, the number of gates, and the circuit depth) can be stated in terms of the parameters of $G$.
The number of qubits is the number of vertices $|V(G)|$ and the number of CZ gates is the number of edges $|E(G)|$.
If all CZ gates are implemented sequentially, $|E(G)|$ is also the maximum circuit depth.
However, it is far more efficient to rearrange the circuit so that many CZ gates are performed simultaneously, provided that each qubit is not involved in more than one operation at a time.
This reduces to the well-known graph edge-coloring problem.

Scheduling all of the CZ operations in the preparation of $|G\rangle$ is equivalent to finding an edge-coloring of $G$: that is, a choice of color for each edge such that no vertex is incident to more than one edge of a given color. Each color represents one time step in the circuit, wherein all CZ operations corresponding to edges of the same color may be performed simultaneously. The minimum number of required colors for an edge-coloring of $G$, known as the chromatic index $\chi'(G)$, coincides with the minimum circuit depth~\cite{cabello2011optimal}. There is a simple upper bound $\chi'(G)\leq\Delta(G)+1$ due to Vizing's well-known theorem~\cite{vizing1964estimate}, and hence the number of required time steps depends on the maximum vertex degree of the graph.
The second graph state of Figure~\ref{fig:graph_coloring_schedule} illustrates a simple example of the relationship between this scheduling problem and the chromatic index.
Although the minimal edge-coloring problem is NP-complete in general, there exists a polynomial-time algorithm based on Vizing's theorem for finding an edge-coloring of $G$ with $\Delta(G)+1$ colors (and hence a scheduling for the circuit of $|G\rangle$ in $\Delta(G)+1$ time steps).

Since LC-equivalent graph states exhibit the same entanglement structure, the resources required to implement a target graph state may be reduced by initially preparing an optimal LC-equivalent state, and then transforming it into the target state using local Clifford operations, which are physically much less expensive than performing CZ gates.
This technique has been used in the literature to optimize the preparation of graph states by first choosing a representative of the LC orbit which is minimal with respect to $|E(G)|$ or $\Delta(G)$~\cite{cabello2011optimal,sharma2026minimising}.
These correspond to reducing the number of CZ gates or the circuit depth, respectively, noting that these may not always coincide.
However, these previous optimizations are based on numerical comparisons of the graphs across the entire LC orbit and have only been used to characterize graph states with up to 12 or 16 qubits, respectively.
Given the complexity of the LC orbit, scaling this numerical technique to larger graph states is impractical in general, but the core idea is still applicable for restricted families of graphs whose structure can be described generally.
This is one of the primary motivations for the study of LC-equivalent graph states which we build on in this paper.

\section{Classifying LC Orbits}
\label{sect:classifying_LC_oribts}

The size of the LC orbit grows rapidly, making it challenging in general to count the number of graphs locally equivalent to a given graph.
For graphs with up to 9 qubits, computational methods have been used to enumerate LC orbits exhaustively~\cite{adcock2020mapping}, and optimal preparations across the LC orbit have been identified for graph states with up 12 qubits~\cite{cabello2011optimal}.
However, Dahlberg et. al. proved that counting the size of the LC orbit for an arbitrary graph is \#P-complete~\cite{dahlberg2020counting}.
Even so, general formulas have been derived for some special families of graphs, including path graphs and cycle graphs~\cite{bouchet1993recognizing}.
In a recent work, the authors of this paper have built on these findings by fully classifying the LC orbits of several families of graphs, including complete bipartite graphs, complete multi-partite graphs, clique-stars, and multi-leaf repeater graphs~\cite{connolly2026local}. The following discussion is based on these results.

We proceed in this section by summarizing the mathematical tools used to perform this classification.
First, we show how the QASST can be used to classify the LC orbit of a graph, with particular attention to DH graphs.
Next, we leverage these classifications to identify optimal graphs across the LC orbit for several special cases.
Finally, we describe a short heuristic algorithm for navigating the LC orbit of any graph.

\subsection{LC Orbit Classification for DH Graphs}
\label{sect:LC_orbit_classification_for_DH_graphs}

Locally equivalent graphs have the same strong splits and locally equivalent quotient graphs.
Hence, fixing a split decomposition and counting all the possible ways locally equivalent quotient graphs may be substituted into the QASST provides an upper bound on the size of the LC orbit.
However, this upper bound is not tight in general: there exist non-locally equivalent graphs with the same strong splits and locally equivalent quotient graphs.
Complete multipartite graphs and clique-stars are one such counterexample (Figure~\ref{fig:special_graph_QASSTs}).
We introduce the term \textit{QASST equivalent} to refer to graphs related in this way, noting that this is weaker than LC equivalence.
Although counting the number of QASST equivalent graphs requires knowledge about the LC orbits for quotient graphs, this has the practical advantage of avoiding the enumeration of the entire LC orbit for a complicated graph.
For distance-hereditary graphs in particular, we shall see that counting QASST equivalent graphs is far simpler than counting LC-equivalent graphs directly.

Bouchet further proved that the property of being distance-hereditary is preserved under local complements (Corollary 4.2 of~\cite{bouchet1988transforming}).
Hence, we next consider the relationship between DH graphs and the QASST.
Cunningham showed that any connected graph has a unique split decomposition into quotient graphs which are complete, star, or prime (Theorem 3 of~\cite{cunningham1982decomposition}).
A graph which decomposes into only star and complete quotient graphs is called \textit{completely separable}~\cite{hammer1990completely}; this is equivalent to the distance-hereditary condition.
Star and complete graphs are themselves locally equivalent: $K_n\cong_{\text{LC}}S_{n-1}$. Furthermore, the size of the LC orbit is known to be $|{\mathcal O}(K_n)|=n+1$, consisting of $K_n$ and $n$ different star graphs with a different choice of center (see part (b) of Figure~\ref{fig:local_complement_example}).
Hence, the size of the QASST equivalence class of a DH graph with $k$ star or complete quotient graphs $Q_1,\cdots,Q_k$ is bounded above by $\prod_{i=1}^k(|V(Q_i)|+1)$, but certain combinations of adjacent quotient graphs which do not preserve strong splits are disallowed. These invalid cases depend specifically on the tree structure of the QASST.

A full discussion of how we derived explicit formulas for the size of the QASST equivalence class in the special cases of complete bipartite graphs, complete multipartite graphs, and clique-stars is included in~\cite{connolly2026local}, but we exclude the technical details from this paper.
Combining these with combinatorial arguments to enumerate symmetries in the graph structure, we obtained further explicit formulas for the sizes of the LC orbits for these cases.
For the complete bipartite graph, this is 
\begin{eqnarray}
|{\mathcal O}(K_{n,m})|&=&nm+n+m+3,
\end{eqnarray}
where the symmetries are described in Table~\ref{tab:complete_bipartite_LC_orbit}.
The formulas for complete multipartite graphs and clique-stars are given in Appendix~\ref{app:formulas_for_CM_and_CS}.
These cases also account for multi-leaf repeater graphs, which we proved are locally equivalent to a complete multipartite graph when $k$ is even and to a clique-star when $k$ is odd.
Examples of the split decomposition for graphs in these three families are shown in Figure~\ref{fig:special_graph_QASSTs}.

\subsection{Optimizations across the LC Orbit}
\label{sect:optimizations_across_the_LC_orbit}

With a full understanding of the LC orbit of a graph, one may choose a representative which is optimal with respect to some parameter across the entire equivalence class.
This can be accomplished using exhaustive numerical searches through the LC orbit such as those of~\cite{cabello2011optimal,adcock2020mapping}, which identified optimal representatives for graph states with up to 12 qubits.
We generally choose a graph $G$ with a minimal $|E(G)|$ to minimize the number of CZ gates in the preparation circuit of $|G\rangle$, or a minimal $\Delta(G)$ to minimize the circuit depth.
Graphs minimizing these two parameters may coincide, as in the LC orbit of a complete bipartite graph (Table~\ref{tab:complete_bipartite_LC_orbit}), but this behavior is not guaranteed to occur in general.
One must consider the trade-offs between CZ gate count and circuit depth when optimizing with respect either parameter.
Even so, searching the orbit for a graph with fewer edges also has the effect of reducing the total degree of the graph. Hence, minimizing $|E(G)|$ also tends to reduce $\Delta(G)$, even if a global minimum is not achieved.

Numerical searches of the LC orbit have been useful for small cases, but the growing size of the orbit makes this approach impractical for larger graphs.
By contrast, the QASST-characterization of the LC orbit which we introduced in~\cite{connolly2026local} exploits the fact that many representatives of the orbit can be described in terms of the symmetries of their quotient graphs.
Counting distinct graphs reduces to counting symmetries, and graphs with the same symmetries have the same parameters.
The complete bipartite graph provides a simple example illustrating this idea: any graph in the LC orbit of $K_{n,m}$ falls into one of the six symmetry cases described in Table~\ref{tab:complete_bipartite_LC_orbit}.
Hence, identifying an optimal representative of the LC orbit reduces to comparing a handful of symmetry classes rather than exploring the entire orbit.
Furthermore, these symmetry classes are independent of the number of vertices, provided $n,m\geq2$.

A representative of ${\mathcal O}(K_{n,m})$ with a minimal number of edges and minimal $\Delta(G)$ may be obtained by choosing any graph in the sixth symmetry class of Table~\ref{tab:complete_bipartite_LC_orbit}.
Note that this fact was already established in~\cite{tzitrin2018local}; our goal here is merely to illustrate the utility of the LC orbit characterization for DH graphs using the QASST.
We also provide an explicit sequence of local complements between $K_{n,m}$ and any graph in one of these symmetry classes. In the graph state case, these can be used to compute an explicit sequence of local Clifford operations which converts an optimal graph state to a target graph state.
In addition to this, Table~\ref{tab:bipartite_fixed_parameters} provides a summary of the parameters for an optimal representative of the LC orbit of $K_{n,m}$ for various values of $n$ and $m$, along with the size of the full LC orbit.
The second graph state example of Figure~\ref{fig:graph_coloring_schedule} shows an optimal representative from ${\mathcal O}(K_{3,3})$.

\begin{table*}
\centering
\scalebox{0.9}{
\begin{tabular}{|cc|cc|c|c|c|c|}
\hline
$Q_1$ sym.&(count)&$Q_2$ sym.&(count)&total&$|E(G)|$&$\Delta(G)$&Trans. from $K_{n,m}$\\
\hline
star-center&1&star-center&1&1&$nm$&$\max\{n,m\}$&id\\
star-center&1&complete&1&1&$nm+\frac{m(m-1)}{2}$&$n+m-1$&$c_{i^n}$\\
complete&1&star-center&1&1&$nm+\frac{n(n-1)}{2}$&$n+m-1$&$c_{i^m}$\\
star-spoke&$n$&complete&1&$n$&$n+m-1+\frac{m(m-1)}{2}$&$n+m-1$&$c_{i^n}\circ c_{i^m}$\\
complete&1&star-spoke&$m$&$m$&$n+m-1+\frac{n(n-1)}{2}$&$n+m-1$&$c_{i^m}\circ c_{i^n}$\\
star-spoke&$n$&star-spoke&$m$&$nm$&$n+m-1$&$\max\{n,m\}$&$c_{i^n}\circ c_{i^m}\circ c_{i^n}$\\
\hline
\end{tabular}
}
\caption{
The QASST-symmetry classes for the LC orbit of a complete bipartite graph $K_{n,m}$, where $n,m\geq2$. All graphs locally equivalent to $K_{n,m}$ have a split decomposition with two quotient graphs $Q_1$ and $Q_2$ falling into one of the six cases enumerated here, based on the derivations in~\cite{connolly2026local}.
Each quotient graph $Q_1$ and $Q_2$ is either complete, star-center (the single split-node is the center of the star), or star-spoke (the split-node is a spoke of the star).
The total number of graphs in the LC orbit is $nm+n+m+3$.
Reading the table, we see that both the number of edges and $\Delta(G)$ is minimized in the sixth case, occurring when $G$ is a \textit{binary-star}.
The last column includes an explicit formula using local complements that transforms $K_{n,m}$ into a graph of the given symmetry class. Here, the index notation means $i^n\in\{1,\cdots,n\}$ and $i^m\in\{1+n,\cdots,m+n\}$, referring to leaf-nodes in $Q_1$ and $Q_2$, respectively.
}
\label{tab:complete_bipartite_LC_orbit}
\end{table*}

\begin{table*}
\begin{center}
\begin{tabular}{|c|cc|c|c|c|}
\hline
$|V(k_{n,m})|$&$n$&$m$&$|{\mathcal O}(K_{n,m})|$&$\displaystyle{\min_{G\in{\mathcal O}(K_{n,m})}|E(G)|}$&$\displaystyle{\min_{G\in{\mathcal O}(K_{n,m})}\Delta(G)}$\\
\hline
4&2&2&11&3&2\\
\hline
5&3&2&14&4&3\\
\hline
6&4&2&17&5&4\\
&3&3&18&5&3\\
\hline
7&5&2&20&6&5\\
&4&3&22&6&4\\
\hline
8&6&2&23&7&6\\
&5&3&26&7&5\\
&4&4&27&7&4\\
\hline
9&7&2&26&8&7\\
&6&3&30&8&6\\
&5&4&32&8&5\\
\hline
10&8&2&29&9&8\\
&7&3&34&9&7\\
&6&4&37&9&6\\
&5&5&38&9&5\\
\hline
11&9&2&32&10&9\\
&8&3&38&10&8\\
&7&4&42&10&7\\
&6&5&44&10&6\\
\hline
12&10&2&35&11&10\\
&9&3&42&11&9\\
&8&4&47&11&8\\
&7&5&50&11&7\\
&6&6&51&11&6\\
\hline
\end{tabular}
\end{center}
\caption{
Summary of parameters and optimizations for graphs locally equivalent to a complete bipartite graph $K_{n,m}$ with up to 12 vertices. The size of the LC orbit is always $|{\mathcal O}(K_{n,m})|=nm+n+m+3$. Across the LC orbit, the minimal edge representative has $|E(G)|=n+m-1$ edges, and the minimal maximum vertex degree is $\Delta(G)=\max\{n,m\}$.
These both occur when $G\in{\mathcal O}(K_{n,m})$ is a \textit{binary-star}.
}
\label{tab:bipartite_fixed_parameters}
\end{table*}

Complete multipartite graphs and clique-stars are more advanced families of DH graph which we have also characterized using this technique. However, the combinatorial arguments are quite involved, and so we refer to the reader to the appendix of~\cite{connolly2026local} for the technical details.
Among these results, we identify minimal edge and minimal $\Delta(G)$ representatives of the LC orbits of $K_{n_1,\cdots,n_k}$ and $CS^r_{n_1,\cdots,n_k}$ based on the QASST structure of the graph, although unlike the complete bipartite case these do not always coincide.
Furthermore, these fall into a number of cases based on the size and parity of $k$ and the values of $n_1,\cdots,n_k$.
The formulas derived in~\cite{connolly2026local} are reproduced in Appendix~\ref{app:formulas_for_CM_and_CS}, but we do not explain all of the cases here.
To make these results more accessible, we include a Python script in the supplementary material that can be used to compute these formulas for any choice of input graph belonging to one of these two families~\cite{Connolly_Python}.
Table~\ref{tab:optimal_vs_split-fuse_comparison} in Section~\ref{sect:comparison_to_optimal_graph_states} also enumerates the optimized parameters for those graphs with up to 12 vertices according to these formulas.

\subsection{A Greedy Heuristic Algorithm}
\label{sect:heuristic}
Additionally, we propose a simple greedy heuristic algorithm applicable to generic graph states, distance-hereditary or otherwise.
This algorithm enumerates triangles within the graph, and applies local complement operations to remove an edge. Note that the computational complexity for enumerating triangles in a graph $G$ is $O(|E(G)|^{1.5})$~\cite{itai1978finding,chiba1985arboricity}, which is less costly than checking for the local equivalence of a pair of graphs. 
The resulting graph $H$ is updated if and only if the $c_v(H)$ has fewer edges, and the search for triangles continues, terminating when no triangles remain.
The total number of edges is reduced compared to the original graph state, but the algorithm is greedy, subject to becoming trapped in a local minimum, and depends on how triangles are enumerated.
However, unlike the other algorithms proposed in this paper, it requires no additional assumptions about the structure of the original graph state. Overall, the complexity of this heuristic in the worst case is $O(|E(G)|^{1.5} + |V(G)|^3)$.
Algorithm~\ref{alg:triangle_based_optimization} describes this technique in pseudocode. 

\begin{algorithm}[h]
    \SetKwInOut{Input}{Input}
    \SetKwInOut{Output}{Output}
    \Input{A graph $G$}
    \Output{Locally equivalent graph $H$, Sequence of local complement $S$, Vertex neighbors for the sequence of local complement $N$}
    
    let $H=G$\;
    Enumerate Triangles in $G$ and let $T=\{v|v \textrm{\,is in a triangle}\}$\;
    \tcc*[f]{$\mathcal{O}(|E(G)|^{1.5})$} 
    
    \ForAll{$v\in T$}{
        \tcc*[f]{$\mathcal{O}(|V(G)|^{3})$}\\
        $H'=c_v(H)$\;
        \eIf{$E(H')<E(H)$}
        {
        update $H \leftarrow c_v(H)$\;
        update $S \leftarrow S+c_v$\;
        update $N \leftarrow N+N(v)$
        }
        {Pass}
        {
        }
    }
    return $H,S,N$
    \caption{Triangle-based optimization for a graph state}
    \label{alg:triangle_based_optimization}
\end{algorithm}

\section{The Split-Fuse Method}
\label{sect:the_split_fuse_method}

Rather than merely searching the LC orbit for an optimal representative, we propose an alternative technique for preparing a DH graph state based on the QASST decomposition.
This method requires introducing some additional auxiliary qubits, but the total number of CZ gates and time steps both scale linearly with respect to the number of qubits in the target graph state.
Furthermore, it requires no additional information about the LC orbit.

We begin with a high-level overview of the proposed preparation protocol, followed by a detailed analysis of the resource requirements.
Then we compare the cost of using this technique versus using an optimal LC-equivalent graph state for the DH graph families we have classified.
Next, we discuss how this method can be generalized beyond the distance-hereditary case to prepare generic graph states.
Finally, we provide the results for numerical simulations comparing the effectiveness of the split-fuse method against methods searching the LC orbit.

\subsection{Preparation Protocol}

Rather than preparing a target graph state directly, the core idea of the split-fuse method is instead to prepare a number of smaller, intermediate graph states. These are then joined using Type-II fusions into the target graph state.
The intermediate graph states coincide with the quotient graphs from the split decomposition of the target state.
The QASST provides the recipe for this construction, with fusions corresponding to the edges in the underlying tree structure.
We state here as a proposition the fact that a graph state can always be reconstructed in this way, with a proof provided in Appendix~\ref{app:proof}.

\begin{proposition}\label{thm:reconstruction}
If $G$ is a graph with split decomposition into quotient graphs $\text{QASST}(G)=(Q_1,\cdots,Q_k)$, then the graph state $|G\rangle$ can be recovered from quotient graph states $|Q_1\rangle,\cdots,|Q_k\rangle$ by using Type-II fusions on all pairs of connected split-nodes described by the QASST.
\end{proposition}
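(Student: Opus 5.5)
We argue by induction on the number $k$ of quotient graphs, i.e.\ on the number of edges $k-1$ of the underlying strong split tree. The whole argument rests on one graph-theoretic fact together with the graphical rule for Type-II fusion recalled above. When two qubits $q_1\in V(G_1)$ and $q_2\in V(G_2)$ from disjoint graph states are fused, the result is the graph state on $(V(G_1)\setminus\{q_1\})\sqcup(V(G_2)\setminus\{q_2\})$. Its edges are those of $G_1-q_1$ and $G_2-q_2$, together with all edges between $N_{G_1}(q_1)$ and $N_{G_2}(q_2)$. This is exactly the graph obtained by replacing a pair of adjacent split-nodes $s_i^j,s_j^i$ by all-to-all connections between their neighbours, which is the reconstruction rule for the QASST stated in Section~\ref{sect:split_decomposition}. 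So each fusion realises, at the level of graph states (up to the known local Pauli byproducts of the fusion), one step of collapsing a tree edge.

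\textbf{Base case.} For $k=1$ there are no split-node pairs and $G=Q_1$, so nothing needs to be done.

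\textbf{Inductive step.} Pick a leaf $Q_\ell$ of the tree and let $Q_m$ be its unique neighbour. Let $T'$ be the tree with $Q_\ell$ removed and $s_m^\ell$ re-declared a leaf-node. By the definition of the split decomposition, $\{s_\ell^m\}\cup(\text{leaf-nodes of }Q_\ell)$ is one side of a strong split of $G$. Collapsing all other tree edges therefore yields a graph $G'$ in which $s_m^\ell$ stands for that side. Because strong splits of $G'$ correspond to the remaining tree edges, the quotient graphs of $G'$ are $Q_1,\dots,Q_k$ without $Q_\ell$.

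By the inductive hypothesis, $|G'\rangle$ is obtained from those quotient states by fusing along the pairs of $T'$. Fusing the qubit $s_m^\ell$ of $|G'\rangle$ with $s_\ell^m$ of $|Q_\ell\rangle$ then gives, by the rule above, the graph $G'$ with $s_m^\ell$ removed and $N_{G'}(s_m^\ell)$ joined completely to $N_{Q_\ell}(s_\ell^m)$. This is precisely $G$, since the split edges of $G$ across this split form the complete bipartite graph between those two neighbour sets, and all other adjacencies are inherited.

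Finally, the order of fusions is irrelevant, because each fusion modifies edges only among neighbours of the two deleted qubits. Those qubits are split-nodes not involved in any other pair, so the fusions along distinct tree edges commute at the graph level.

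\textbf{Expected obstacle.} The main difficulty is the quantum-level bookkeeping. The Type-II fusion of~\cite{lee2023graph} produces the stated graph only up to outcome-dependent local Pauli corrections, and the Hadamard in the fusion must be placed on a consistent side. I would first derive the post-measurement state for a single fusion using stabilizers. Write $|G_1\rangle|G_2\rangle$ and project qubits $q_1,q_2$ onto the rotated Bell basis. Track the generators $K_v=X_v\prod_{u\in N(v)}Z_u$ to check that the resulting stabilizers are those of the fused graph, with byproducts of the form $Z_{N(q_1)}$ and $Z_{N(q_2)}$. These byproducts are Pauli operators, so they commute with the later fusions up to sign and can be corrected at the end. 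Once this single-fusion lemma is in hand, the induction above completes the proof.
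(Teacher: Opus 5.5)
Your proposal is correct and follows essentially the same route as the paper's proof. Both rest on the observation that a Type-II fusion of a split-node pair $s_i^j,s_j^i$ joins $N(s_i^j)$ completely to $N(s_j^i)$, which is exactly the QASST reconstruction step, applied once per tree edge; your leaf-by-leaf induction formalizes the paper's ``repeat for each edge'' iteration, and your Pauli-byproduct bookkeeping adds rigor the paper omits. Two small points: the split side is just the leaf-nodes of $Q_\ell$, since the marker $s_\ell^m$ is not a vertex of $G$. Also, your one-line commutation argument is too quick, because a fusion does enlarge the neighbourhoods of other split-nodes adjacent to the fused qubits; order-independence still holds, since two fusions that each join distinct components (as acyclicity of the tree guarantees) commute, but the proposition only needs the order your induction supplies.
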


Using this technique comes with some additional resource requirements, in particular a number of auxiliary qubits corresponding to the split-nodes in the quotient graphs.
These qubits are consumed in pairs by the Type-II fusions.
After assembling the target graph state, the remaining qubits correspond to the leaf-nodes from the quotient graphs.
However, this method has two principal advantages over direct implementation of a target graph state.
The first comes from the fusions themselves, which have the resource cost of a single CZ gate. Because a Type-II fusion introduces all-to-all connections between the fused qubits' neighbors, these achieve the same entanglement in one operation that would require many CZ gates to achieve directly.
Although two additional auxiliary qubits are required for each fusion, we save on potentially many costly CZ operations.

The second advantage of the split-fuse method is based on the fact that, before fusing, the intermediate graph states may be handled independently.
Preparing these states in parallel can potentially reduce the circuit depth, but more significantly allows for a reduction in CZ gates by exploiting LC equivalence.
The quotient graphs of the target graph state are determined by the split decomposition of the underlying graph, but these need not be implemented as intermediate states directly.
Rather, we are free to prepare an LC-equivalent graph state, and then use local Clifford operations to transform it into a quotient graph state before fusing.
In other words, we may further reduce the number of CZ gates by preparing an optimized graph state from the LC orbit of each quotient graph.

\begin{figure*}[t]
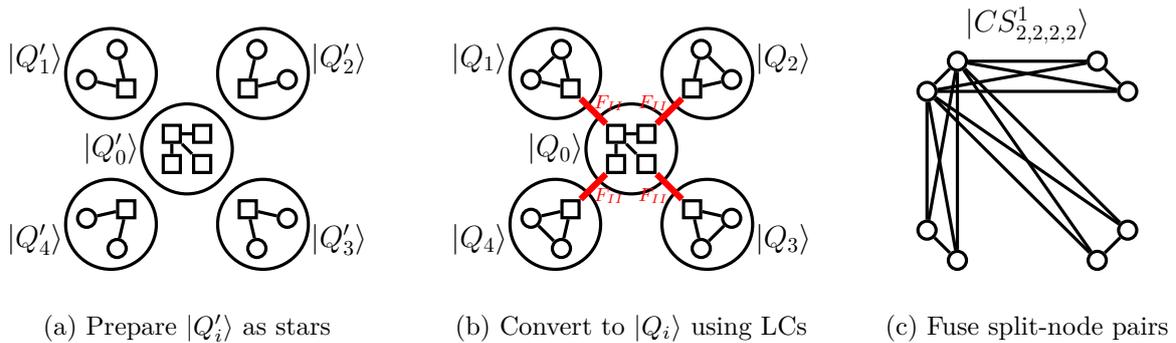

\centering
\small
\begin{tabular}{ccccc}
\includegraphics[scale=0.8,page=16]{Figures/Section_Review.pdf}&&\includegraphics[scale=0.8,page=17]{Figures/Section_Review.pdf}&&\includegraphics[scale=0.8,page=18]
{Figures/Section_Review.pdf}\\
&&&&\\
(a) Prepare $|Q_i'\rangle$ as stars&&(b) Convert to $|Q_i\rangle$ using LCs&&(c) Fuse split-node pairs
\end{tabular}
\normalsize
\caption{
An example of the split-fuse method for the preparation of the clique-star graph state $|CS^1_{2,2,2,2}\rangle$.
After identifying the quotient graphs $Q_0,\cdots,Q_4$ in the split decomposition of $|CS^1_{2,2,2,2}\rangle$, star graph states $|Q_0'\rangle\cdots|Q_4'\rangle$ with the same numbers of vertices are prepared. Each of these is converted into a corresponding quotient graph state $|Q_i\rangle$ using local Clifford operations.
Finally, Type-II fusions are used on pairs of split-nodes based on the red edges in the QASST to join these into the target graph state $|CS^1_{2,2,2,2}\rangle$. In total, this example uses 16 qubits (8 auxiliary), 15 CZ gates (4 fusions), and 5 time steps.
The corresponding preparation circuit for $|CS^1_{2,2,2,2}\rangle$ is provided in Figure~\ref{fig:split_fuse_circuit}.
}
\label{fig:split_fuse_example}
\end{figure*}

This is particularly advantageous for distance-hereditary graph states, for which all quotient graphs are either star or complete.
Since star and complete graphs are locally equivalent, we may initialize each quotient graph state of a target DH graph state as a star graph, and the transform some subset of these into complete graphs using local Clifford operations before fusing.
Star graphs have a minimal number of edges across the LC orbit, and hence this approach guarantees a minimal number of CZ operations before fusing.
Figure~\ref{fig:split_fuse_example} shows an example of this protocol for the preparation of a clique-star graph state.

Figure~\ref{fig:split_fuse_circuit} shows the split-fuse circuit used to construct the graph state illustrated in Figure~\ref{fig:split_fuse_example}.
The quotient graph states are disjoint and may be prepared simultaneously; these correspond to the portions of the circuit highlighted in blue.
After these are prepared, all Type-II fusions are performed at the same time and the target graph state is obtained.

\begin{figure}[htbp]
\includegraphics[width=\linewidth,page=24]{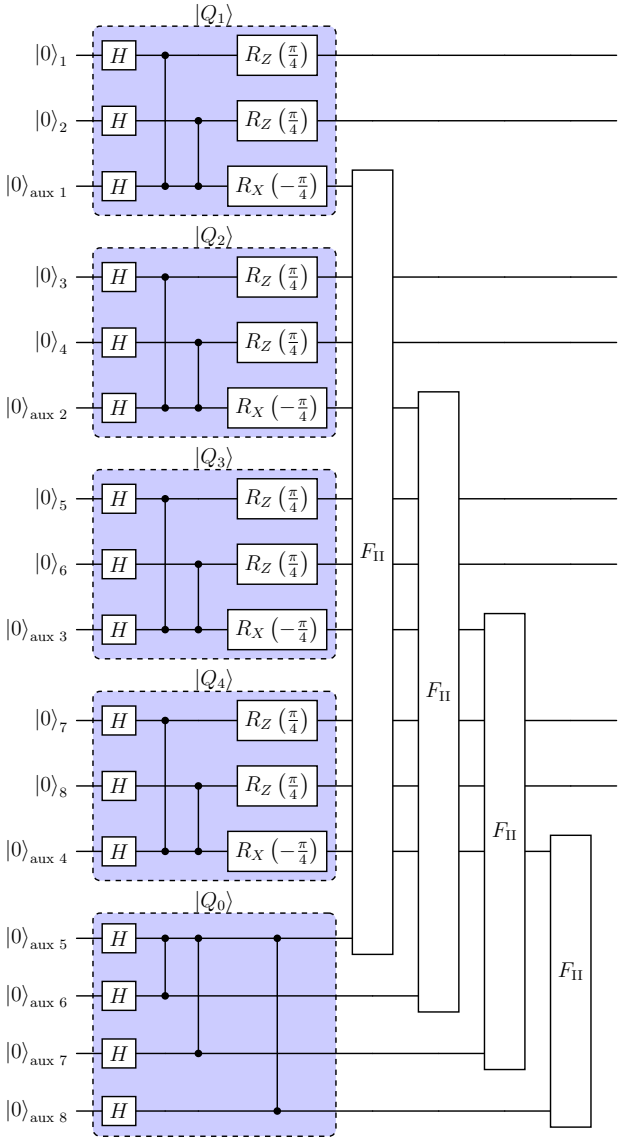}
\caption{
The circuit for assembling the clique-star graph state $|CS^1_{2,2,2,2}\rangle$ using the split-fuse method, based on the example of Figure~\ref{fig:split_fuse_example}.
Here, $F_{\text{II}}$ denotes a Type-II fusion between auxiliary qubits which are consumed in the process. All fusions are performed simultaneously. 
}
\label{fig:split_fuse_circuit}
\end{figure}

\subsection{Resource Cost for DH Graph States}

Let $G$ be any distance-hereditary graph with $n$ vertices and a split decomposition into $k$ quotient graphs $\textit{QASST}(G)=(Q_1,\cdots,Q_k)$. Let $n_i$ denote the number of leaf nodes in quotient graph $Q_i$, so that $n=n_1+\cdots+n_k$. Let $d_i$ denote the number of split-nodes in quotient graph $Q_i$, noting that $d_i=\text{deg}(Q_i)$ matches the degree of the quotient graph as a vertex in the graph-labeled strong split tree. Let $r$ denote the number of edges in this graph-labeled tree and let $d=\sum_{i=1}^kd_i=2r$ be the total degree.
Since the QASST is an acyclic graph with $k$ vertices and $r$ edges, note that $r=k-1$.

Our goal is to prepare a target DH graph state $|G\rangle$ with $n$ qubits.
We require an additional $d=2r=2k-2$ auxiliary qubits to be consumed by fusions.
Since each quotient graph of $G$ is locally equivalent to a star, we prepare $k$ intermediate star graph states $|Q_1'\rangle,\cdots,|Q_k'\rangle$. Each of these $|Q_i'\rangle$ contains $n_i+d_i$ qubits, matching $n_i$ leaf-nodes and $d_i$ split-nodes in the corresponding quotient graph $Q_i$.
We then use local Clifford operations as needed to change these into the quotient graph states $|Q_1\rangle,\cdots,|Q_k\rangle$.
Finally, we perform $r=k-1$ Type-II fusions on pairs of qubits matching split-nodes to assemble the target graph state $|G\rangle$.

We summarize the resource requirements of this strategy as follows.
Constructing each quotient graph state $|Q_i\rangle$ requires $n_i+d_i-1$ CZ gates, followed possibly by a local Clifford operation.
In total, this amounts to $\sum_{i=1}^k(n_i+d_i-1)=n+d-k=n+2r-k$ CZ gates across all quotient graph states, plus an additional $r=k-1$ Type-II fusions.
Each star graph state requires $n_i+d_i-1$ time steps in the preparation circuit, plus an additional time step if a local Clifford operation is needed.
These may be prepared in parallel, and so the total required time steps depends only on the largest quotient graph.
All fusions may be performed simultaneously, contributing 1 to the number of time steps, and consuming the $d$ auxiliary qubits.

This leaves a single connected $n$-qubit graph state corresponding to $|G\rangle$.
Counting the Type-II fusions as CZ gates, we summarize the preparation resources used in this process as follows.
\begin{eqnarray}
\text{total CZ gates}&=&n+2k-3\label{eq:split_fuse_CZ}\\
\text{total time steps}&=&1+\max_{i=1}^k\{n_i+d_i\}\label{eq:split_fuse_time_steps}\\
\text{total qubits}&=&n+2k-2\label{eq:split_fuse_qubits}
\end{eqnarray}
Notably, since the parameters $k,d,r\leq n$, the three resources described above all scale linearly with the number of qubits $n$ in the target graph state.
An example of the three steps in this process is outlined in Figure~\ref{fig:split_fuse_example}. 
This can be implemented for any DH graph state provided its split decomposition as described by the QASST is known.

For general simple undirected graphs, DH or otherwise, there exist efficient algorithms for computing the split decomposition~\cite{ma1994n2,dahlhaus1994efficient,dahlhaus2000parallel,charbit2012linear}. The fastest of these is linear-time with respect to the number of vertices~\cite{charbit2012linear}.

\subsection{Comparison to Optimal Graph States}
\label{sect:comparison_to_optimal_graph_states}

In the context of optimizing preparation resources, it is natural to ask whether using the split-fuse method to prepare a target graph state provides any advantages over direct implementation of an optimal LC-equivalent graph state followed by local Clifford transformations to the target graph state.
In the case of graph states LC-equivalent to $|K_{n,m}\rangle$, the answer is no.
As we observed in Section~\ref{sect:optimizations_across_the_LC_orbit}, the optimal representative of the LC orbit of a complete bipartite graph is a binary-star, which is acyclic. In particular, this means that direct implementation of this optimal graph state already requires a minimal number of edges and time steps, and no auxiliary qubits.
However, this property is unique to acyclic graphs, and generic graph states will not contain such a representative in the LC orbit.

The split-fuse method can provide an advantage for preparing graph states locally equivalent to a complete multipartite graph or a clique-star, although the improvements only become apparent for larger numbers of qubits.
Table~\ref{tab:optimal_vs_split-fuse_comparison} shows a comparison of the resources required for this technique versus those required for direct implementation of an optimal graph state in these two cases.
We show all such graph states with up 12 qubits; in these small cases, direct implementation is always more efficient with regards to the number of CZ gates, although in some of these cases the split-fuse method requires slightly fewer time steps.
When we reach 24 qubits, we encounter the first graph state for which this technique requires both fewer CZ gates and fewer time steps. We show a selected subset of such graph states in the table, showing that the number of cases that gain an advantage from the split-fuse method increases with the number of qubits.
These numerical comparisons were generated in Python code available on GitHub~\cite{Connolly_Python}, and one may verify that this pattern continues for higher numbers of qubits.

\begin{table*}
\begin{center}
\scalebox{0.64}{
\begin{tabular}{|cc|cccccc|ccc|ccc|ccc|}
\hline
&&\multicolumn{6}{c|}{$n=n_1+\cdots+n_k=|V(G)|$}&\multicolumn{3}{c|}{$G\cong_{\text{LC}}K_{n_1,\cdots,n_k}$}&\multicolumn{3}{c|}{$G\cong_{\text{LC}}CS^r_{n_1,\cdots,n_k}$}&\multicolumn{3}{c|}{split-fuse method}\\
$n$&$k$&$n_1$&$n_2$&$n_3$&$n_4$&$n_5$&$n_6$&$|{\mathcal O}(G)|$&$\min |E(G)|$&$\min\Delta(G)+1$&$|{\mathcal O}(G)|$&$\min |E(G)|$&$\min\Delta(G)+1$&CZ gates&time steps&qubits (aux.)\\
\hline
6&3&2&2&2&&&&40&6&4&41&6&4&11&4&12 (6)\\
\hline
7&3&3&2&2&&&&50&7&5&52&7&5&12&5&13 (6)\\
\hline
8&3&4&2&2&&&&60&8&6&63&8&6&13&6&14 (6)\\
&3&3&3&2&&&&62&8&5&66&8&5&13&5&14 (6)\\
&4&2&2&2&2&&&149&10&5&148&9&4&15&5&16 (8)\\
\hline
9&3&5&2&2&&&&70&9&7&74&9&7&14&7&15 (6)\\
&3&4&3&2&&&&74&9&6&80&9&6&14&6&15 (6)\\
&3&3&3&3&&&&76&10&{\color{red}6}&84&9&5&14&5&15 (6)\\
&4&3&2&2&2&&&190&11&5&188&10&5&16&5& 17 (8)\\
\hline
10&3&6&2&2&&&&80&10&8&85&10&8&15&8&16 (6)\\
&3&5&3&2&&&&86&10&7&94&10&7&15&7&16 (6)\\
&3&4&4&2&&&&88&10&6&97&10&6&15&6&16 (6)\\
&3&4&3&3&&&&90&11&{\color{red}7}&102&10&6&15&6&16 (6)\\
&4&4&2&2&2&&&231&12&6&228&11&6&17&6&18 (8)\\
&4&3&3&2&2&&&242&12&5&238&11&5&17&5&18 (8)\\
&5&2&2&2&2&2&&526&12&5&527&13&6&19&6&20 (10)\\
\hline
11&3&7&2&2&&&&90&11&9&96&11&9&16&9&17 (6)\\
&3&6&3&2&&&&98&11&8&108&11&8&16&8&17 (6)\\
&3&5&4&2&&&&102&11&7&114&11&7&16&7&17 (6)\\
&3&5&3&3&&&&104&12&{\color{red}8}&120&11&7&16&7&17 (6)\\
&4&5&2&2&2&&&272&13&7&268&12&7&18&7&19 (8)\\
&3&4&4&3&&&&106&12&{\color{red}7}&124&11&6&16&6&17 (6)\\
&4&4&3&2&2&&&294&13&6&288&12&6&18&6&19 (8)\\
&4&3&3&3&2&&&308&13&5&300&12&5&18&5&19 (8)\\
&5&3&2&2&2&2&&674&13&5&676&14&6&20&6&21 (10)\\
\hline
12&3&8&2&2&&&&100&12&10&107&12&10&17&10&18 (6)\\
&3&7&3&2&&&&110&12&9&122&12&9&17&9&18 (6)\\
&3&6&4&2&&&&116&12&8&131&12&8&17&8&18 (6)\\
&3&6&3&3&&&&118&13&{\color{red}9}&138&12&8&17&8&18 (6)\\
&4&6&2&2&2&&&313&14&8&308&13&8&19&8&20 (8)\\
&3&5&5&2&&&&118&12&7&134&12&7&17&7&18 (6)\\
&3&5&4&3&&&&122&13&{\color{red}8}&146&12&7&17&7&18 (6)\\
&4&5&3&2&2&&&346&14&7&338&13&7&19&7&20 (8)\\
&3&4&4&4&&&&124&14&{\color{red}8}&151&12&6&17&6&18 (6)\\
&4&4&4&2&2&&&357&14&6&348&13&6&19&6&20 (8)\\
&4&4&3&3&2&&&374&14&6&362&13&6&19&6&20 (8)\\
&5&4&2&2&2&2&&822&14&6&825&15&6&21&6&22 (10)\\
&4&3&3&3&3&&&392&14&{\color{red}6}&376&15&{\color{red}6}&19&5&20 (8)\\
&5&3&3&2&2&2&&862&14&5&866&15&6&21&6&22 (10)\\
&6&2&2&2&2&2&2&1823&16&7&1822&15&6&23&7&24 (12)\\
\hline
\multicolumn{16}{c}{\vdots}\\
\hline
24&3&8&8&8&&&&436&{\color{red}30}&{\color{red}16}&779&24&10&29&10&30 (6)\\
&4&6&6&6&6&&&2885&26&{\color{red}9}&2260&{\color{red}33}&{\color{red}12}&31&8&32 (8)\\
\hline
25&3&9&8&8&&&&470&{\color{red}31}&{\color{red}17}&862&25&11&30&11&31 (6)\\
&4&7&6&6&6&&&3266&27&{\color{red}10}&2516&{\color{red}34}&{\color{red}13}&32&9&33 (8)\\
&5&5&5&5&5&5&&9856&{\color{red}36}&{\color{red}10}&10880&30&{\color{red}9}&34&7&35 (10)\\
\hline
26&3&10&8&8&&&&504&{\color{red}32}&{\color{red}18}&945&26&12&31&12&32 (6)\\
&3&9&9&8&&&&506&{\color{red}32}&{\color{red}17}&954&26&11&31&11&32 (6)\\
&4&8&6&6&6&&&3647&28&{\color{red}11}&2772&{\color{red}35}&{\color{red}14}&33&10&34 (8)\\
&4&7&7&6&6&&&3698&28&{\color{red}10}&2798&{\color{red}35}&{\color{red}13}&33&9&34 (8)\\
&5&6&5&5&5&5&&11240&{\color{red}37}&{\color{red}11}&12520&31&{\color{red}10}&35&8&36 (10)\\
\hline
27&3&11&8&8&&&&538&{\color{red}33}&{\color{red}19}&1028&27&13&32&13&33 (6)\\
&3&10&9&8&&&&542&{\color{red}33}&{\color{red}18}&1046&27&12&32&12&33 (6)\\
&3&9&9&9&&&&544&{\color{red}34}&{\color{red}18}&1056&27&11&32&11&33 (6)\\
&4&9&6&6&6&&&4028&29&{\color{red}12}&3028&{\color{red}36}&{\color{red}15}&34&11&35 (8)\\
&4&8&7&6&6&&&4130&29&{\color{red}11}&3080&{\color{red}36}&{\color{red}14}&34&10&35 (8)\\
&4&7&7&7&6&&&4188&29&{\color{red}10}&3108&{\color{red}36}&{\color{red}13}&34&9&35 (8)\\
&5&7&5&5&5&5&&12624&{\color{red}38}&{\color{red}12}&14160&32&{\color{red}11}&36&9&37 (10)\\
&5&6&6&5&5&5&&12808&{\color{red}38}&{\color{red}11}&14408&32&{\color{red}10}&36&8&37 (10)\\
\hline
\end{tabular}
}
\end{center}
\caption{
Comparison of optimal graph states from the LC orbits of $K_{n_1,\cdots,n_k}$ and $CS^r_{n_1,\cdots,n_k}$ with the resource requirements of the split-fuse method, where $n=n_1+\cdots+n_k$ is the number of qubits in the target state (auxiliary qubits consumed by fusions are indicated in parentheses). 
Any choice of graph state $|G\rangle$ requires $|E(G)|$ CZ gates and at most $\Delta(G)+1$ time steps to implement directly, whereas the split-fuse method requires $n+2(k+1)-3$ CZ gates and $2+\max\{k-1,n_1,\cdots,n_k\}$ time steps for the graphs in this table.
As the number of qubits increases, the resource gap between using an optimal state and the split-fuse method decreases, with the first improvement in the number of time steps occurring when $n=9$ and the first improvement in the number of CZ gates occurring when $n=24$.
All graph states in these families with up to 12 qubits are displayed, along with a selected subset of larger states benefiting from the split-fuse method.
Entries are highlighted in red when the split-fuse method surpasses the optimal state.
The number of graph states which benefit from using the split-fuse method increases with $n$.
Whenever the number of CZ gates is improved, we also observe an improvement in the number of time steps.
}
\label{tab:optimal_vs_split-fuse_comparison}
\end{table*}

Although a choice of optimal graph state from the LC orbit is often superior for small cases, the true advantage of the split-fuse method is the simplicity of identifying the preparation protocol. This comes immediately with knowledge of the split decomposition, for which efficient algorithms are known, and requires no additonal knowledge about locally equivalent graphs.
The complete classification of the LC orbit we derive in~\cite{connolly2026local} for complete multipartite graphs and clique-stars is quite technical, but we see from the table that the split-fuse method is almost as good even for small graph states.
However, the split-fuse method may be applied for DH graphs whose LC orbit has not been classified and an optimal graph representative is not known. This is especially promising given how quickly the size of the LC orbit grows.
Even without a comparison for other graph states, given how the resource requirements scale linearly, we expect that the split-fuse method will surpass using an optimal graph state for general DH graph states that are sufficiently large.

\subsection{Generalization to non-DH Graph States}

We initially define the split-fuse method for distance-hereditary graph states, but there is a natural generalization to generic simple graphs.
The split-fuse preparation protocol is based on the quotient graphs in the split decomposition, which exists for any graph. The main advantage comes from preparing optimal intermediate graph states LC-equivalent to the quotient graph states.
When restricting to DH graphs in particular, these will always be star graph states, which allowed us to derive the simple formulas of Equations~\ref{eq:split_fuse_CZ}, \ref{eq:split_fuse_time_steps}, and \ref{eq:split_fuse_qubits}.
However, the same concept can be applied to non-DH graphs more generally.

The split decomposition of a non-DH graph will contain at least one prime quotient graph.
Unfortunately, there are many different types of prime graphs, making a general characterization with simple formulas unlikely.
Even so, optimal prime quotient graphs could be identified in specific cases by exploring their LC orbits, as has already been done numerically for small graph states with up to 12 qubits~\cite{cabello2011optimal,adcock2020mapping}.
In those cases where optimal prime quotient graphs are already known, these could be used with local Clifford operations to prepare the quotient graph states of a corresponding non-DH graph, followed by fusions to assemble the target non-DH graph state.

This suggests a hybrid split-fuse strategy for preparing non-DH graph states.
This would begin by computing the QASST of a target graph state via the split decomposition.
For all quotient graphs with known optimal representatives (including star and complete quotient graphs), these could be prepared and converted into quotient graph states.
The remaining quotient graph states could be prepared directly before fusing into the target graph state.
Based on this idea, future work adapting the split-fuse method to this hybrid strategy would begin with an examination of optimal prime graphs with as any many qubits as feasible.

\subsection{Numerical Comparison of Techniques}
\label{sect:numerical_comparison}

\begin{figure*}[t]
\centering

\includegraphics[width=0.95\linewidth]{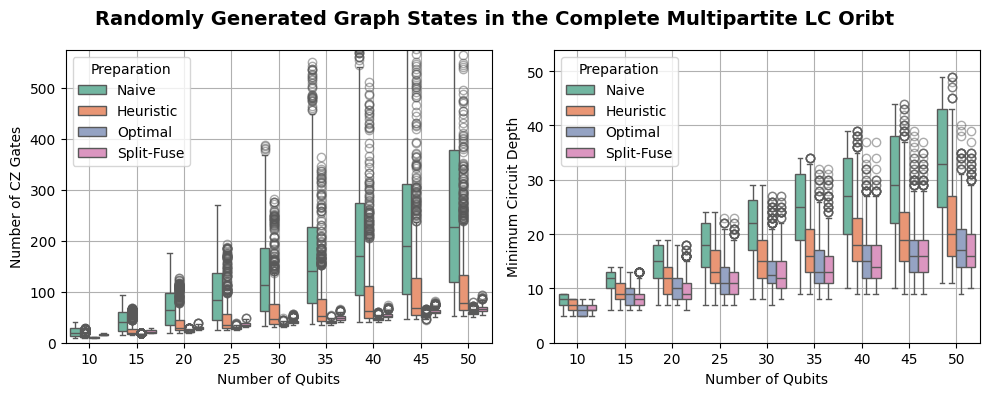}

\includegraphics[width=0.95\linewidth]{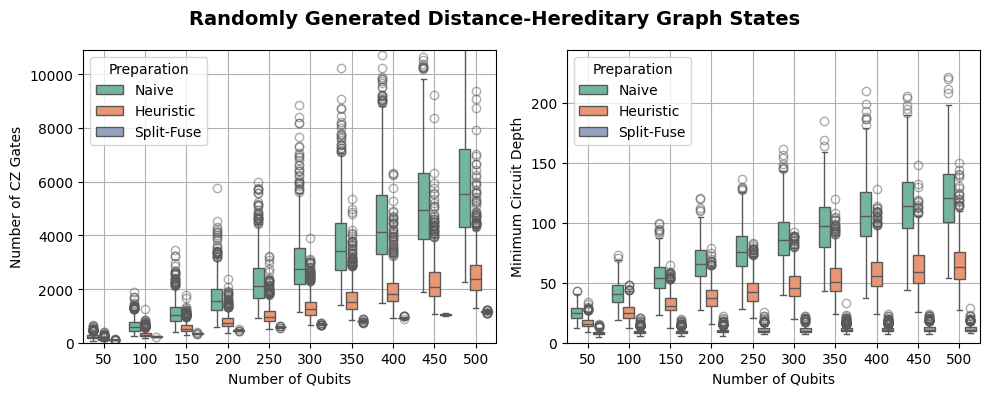}
\caption{
Comparison of the resource requirements (CZ gates on the left and circuit depth on the right) for the various graph state preparation protocols.
Boxplots use 1000 randomly generated sample graphs each.
These use graphs from the LC orbit of a complete multipartite graph (above), or generic distance-hereditary graphs (below).
The naive method serves as the baseline, in which a randomly generated graph state is prepared solely using CZ gates.
This is compared against the direct preparation of the heuristic-improved version of this graph, as well as the optimal locally equivalent graph, where it is known for the complete multipartite case.
We also compare these against the resources required for building the original randomly generated graph state using the split-fuse technique.
}
\label{fig:boxplot_comparisons}
\end{figure*}

In addition to the data summarized in Table~\ref{tab:optimal_vs_split-fuse_comparison}, Figure~\ref{fig:boxplot_comparisons} also shows numerical simulations comparing various preparation methods for randomly generated graph states of different sizes.
In the case of graph states locally equivalent to a complete multipartite graph, where the optimal representative of the LC orbit is known, we see that the greedy heuristic algorithm often comes close to this bound even when a heuristic-improved graph is not optimal.
By contrast, the resources required by the split-fuse method are almost always comparable to and occasionally superior to direct implementation for an optimal graph state, consistent with the data in Table~\ref{tab:optimal_vs_split-fuse_comparison}.
These facts hold both for the number of CZ gates and the minimum circuit depth.

The numerical simulations for randomly generated distance-hereditary graphs show a similar relationship between the preparation techniques, but the optimal representative of the LC orbit is not known in these cases.
Even so, we see that the resources required by the heuristic-improved graph state as well as the split-fuse technique are a fraction of those required by naive implementation.
The split-fuse method in particular scales exceptionally well compared with the original randomly generated graph. This is especially true for the circuit depth, which appears almost constant.

This phenomenon can be explained by the fact that the number of time steps required by the split-fuse method is determined by the largest quotient graph in the split decomposition, but is unaffected by the total number of quotient graphs.
The random distance-hereditary graphs we simulate are constructed recursively through a process of randomly selected one-vertex extensions by twins~\cite{bandelt1986distance}. In terms of the QASST, a one-vertex extension either introduces a new quotient graph, or increases the size of an existing quotient graph by one.
It is more likely that the QASST of a graph randomly generated in this way has many small quotient graphs of similar sizes on average than a few large quotient graphs, which explains why the number of required time steps (and hence circuit depth) grows so slowly in this case.

\begin{figure*}[htbp]
\centering

\includegraphics[width=0.95\linewidth]{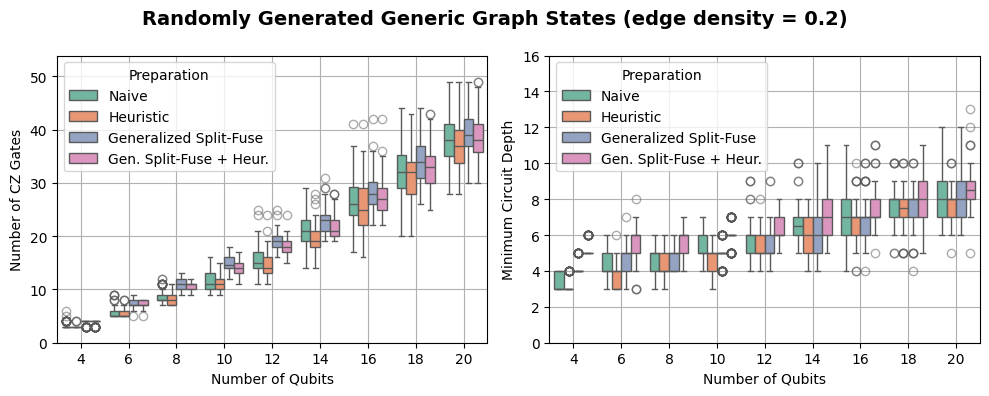}

\includegraphics[width=0.95\linewidth]{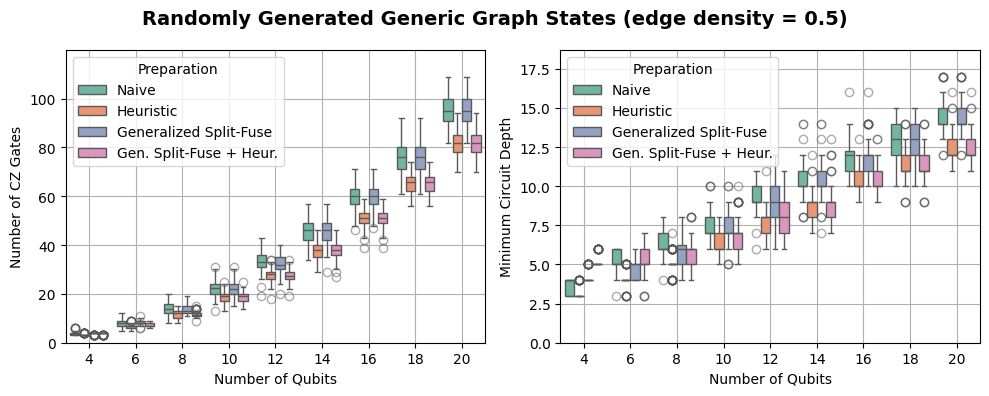}

\includegraphics[width=0.95\linewidth]{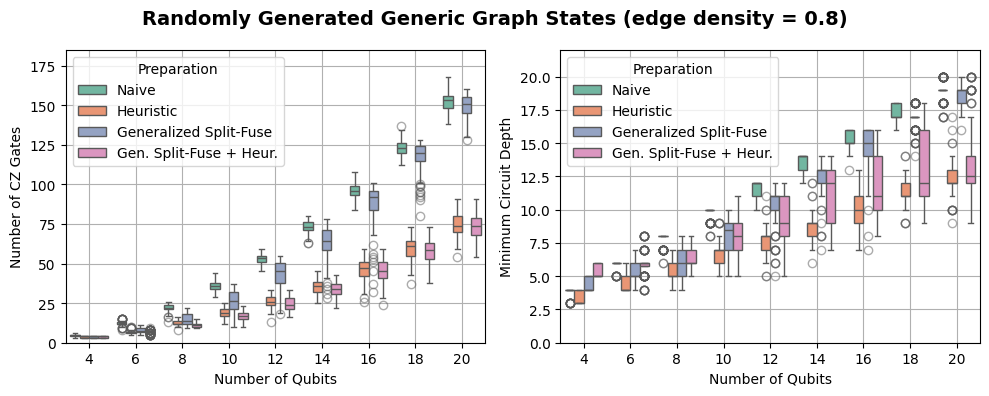}

\caption{
Comparison of the resource requirements (CZ gates on the left and circuit depth on the right) for the various preparation methods for generic randomly generated connected Erdős–Rényi graph states with different edge densities.
The baseline naive preparation and heuristic-improved preparation are the same methods used in Figure~\ref{fig:boxplot_comparisons}. The generalized split-fuse method relies on computing the QASST of the graph; star and complete quotient graphs are prepared as stars (which require minimal resources), while prime quotient graphs are prepared naively.
The combined generalized split-fuse + heuristic method replaces prime quotient graphs with heuristic-improved versions.
Computation time is dominated by the subroutine to compute the QASST, and so only relatively small graph states are compared, with boxplots using 100 randomly generated sample graphs each.
}
\label{fig:random_graph_boxplot_comparisons}
\end{figure*}

Figure~\ref{fig:random_graph_boxplot_comparisons} shows a comparison of the preparation protocols for generic randomly generated connected Erdős–Rényi graph states with various edge densities.
Here, we use two versions of a generalized split-fuse method for non-DH graphs: a basic version, wherein prime quotient graphs are prepared naively, and a version combined with the heuristic algorithm to improve prime quotient graphs.
These are compared to naive preparation and the heuristic algorithm applied to the original graph.
There are distinct differences in behavior using random graphs with different edge densities.

At low edge densities, the generalized split-fuse method actually degrades the resource requirements.
This is consistent with the behavior we see in Table~\ref{tab:optimal_vs_split-fuse_comparison} for small graph states, and can be explained by the fact that the quotient graphs are relatively small. The additional resources introduced by auxiliary qubits offset the benefits of preparing an optimized quotient graph when the quotients themselves are small.
Furthermore, the low edge density limits the improvements obtainable by the proposed preparation protocols.
For larger or denser graph states, we would expect to see clearer improvements.

At middle edge densities, we see another interesting phenomenon.
The naive and generalized split-fuse methods are identical, as are the heuristic and combined methods.
This can only occur when the randomly generated graph is itself prime, which implies that the QASST consists of a single quotient graph.
With high probability, this suggests that randomly generated graphs with around 0.5 edge density are prime.
In these cases, the split-fuse method provides no benefit, although the heuristic algorithm can still be useful.

However, at high edge densities, we begin to see the expected improvements.
The generalized split-fuse method provides modest improvements, suggesting the existence of multiple complete quotient graphs in the QASST, while the best improvements come from the combined protocol.
Inspections of the QASSTs for the randomly generated graphs usually reveal the existence of one large prime quotient graph and several small complete quotient graphs, explaining why the heuristic algorithm seems to have such a disproportionate effect.
As in the other cases, we expect the improvements to become more pronounced as the graph states become larger.
Furthermore, we can see that these techniques are most beneficial to graph states with high edge density.

\section{Discussion and Conclusions}
\label{sect:conclusion}

In this work, we have investigated the optimization of graph state preparation within local Clifford (LC) equivalence classes, focusing on the broad and structurally rich class of distance-hereditary (DH) graphs. By exploiting the closure of DH graphs under local complement operations and their canonical split decomposition, we developed scalable methods that avoid exhaustive enumeration of LC orbits, which becomes infeasible for large systems.

Our first contribution is an analytic classification of LC orbits for several families of DH graphs of practical relevance, including complete bipartite and multipartite graphs, clique-stars, and multi-leaf repeater graphs. Using the quotient-augmented strong split tree (QASST) representation, we identified symmetry classes within each orbit and determined LC-equivalent representatives that minimize either the number of entangling gates or the maximum vertex degree, and hence the minimum achievable preparation depth. These results provide provably optimal constructions for these families and extend existing numerical studies to substantially larger graph sizes.

Our second and more general contribution is the split–fuse preparation method for distance-hereditary graph states along with its generalized version to generic graph states. This construction leverages the split decomposition to assemble a target graph state from smaller, efficiently preparable components using fusion operations. For DH graphs, all quotient components can be chosen as stars, which are locally equivalent to complete graphs and require minimal entangling resources. As a result, in the distance-hereditary case, the split–fuse method achieves linear scaling in the number of controlled-Z gates, time steps, and auxiliary qubits, independent of the size of the LC orbit. Numerical comparisons show that while direct preparation of optimal LC representatives can be more efficient for small systems, the split–fuse method becomes advantageous as the graph size increases.
By contrast, the generalized version the split-fuse method shows promise when applied to dense graphs, especially when combined with other algorithms capable of simplifying prime quotient graphs.

The split–fuse approach introduces additional auxiliary qubits and relies on fusion operations, whose success probabilities and noise characteristics depend on the physical platform. These factors represent an important practical consideration and may offset some of the asymptotic advantages in near-term implementations. Nevertheless, the linear scaling of entangling resources and circuit depth makes the method particularly attractive for large-scale graph state generation, especially in photonic and modular architectures where fusion-based operations are natural.

Although our analysis focuses on distance-hereditary graphs, the underlying ideas extend more broadly. General graphs admit split decompositions that include prime components, which cannot be reduced to stars. In such cases, one may combine the split–fuse framework with optimized preparation strategies for prime graph states, potentially yielding hybrid constructions that retain favorable scaling properties.
As shown in our simulations, the generalized version of the split-fuse technique provides an improvement for the preparation of dense generic graph states, especially when combined with the heuristic algorithm.
Exploring these generalizations further, as well as incorporating realistic noise models and hardware constraints, are promising directions for future work.

Overall, our results demonstrate that combining structural graph-theoretic insights with operational considerations enables scalable optimization of graph state preparation beyond what is accessible through brute-force LC orbit searches. We expect these methods to be useful for the design of large entangled resource states in measurement-based quantum computation, quantum networking, and related applications.

\section*{Data Availability}
All programs and data used in this paper are available in a GitHub repository\cite{Connolly_Python}.
This includes a simple Python notebook tutorial which can be used to reproduce our numerical results.

\section*{Acknowledgments}
\label{sect:acknowledgements}

NC would like to thank Kenneth Goodenough for his significant help in building an intuition for working with graph states.
This work was supported in part by the JST Moonshot R\&D Program (Grant Nos. JPMJMS2061 \& JPMJMS226C).
SN has been supported by the New Energy and Industrial Technology Development Organization (NEDO) JPNP23003, JSPS Overseas Challenge Program for Young Researchers, and the JSPS Overseas Research Fellowships.

\bibliographystyle{plain}
\bibliography{references.bib}

\appendix
\section{Graph Theory Basics}
\label{app:graph_basic}
A \textit{graph} $G=(V,E)$ consists of a set of vertices $V$ and edges $E\subseteq V\times V$ between pairs of vertices. In this paper, we assume that graphs are \textit{simple} (contain no multi-edges nor self-loops), \textit{undirected} $((u,v)=(v,u)\in E)$, \textit{connected} (there exists a path in $G$ between any pair of vertices), and \textit{labeled} (vertices are labeled, usually by integers).
Given a graph $G$, we will denote its vertex set by $V(G)$ and its edge set by $E(G)$. Two vertices $u,v\in V(G)$ are called \textit{adjacent} if there exists an edge $e=(u,v)\in E(G)$, and $e$ is said to be \textit{incident} to $u$ and $v$.
The \textit{distance} between two vertices is the length (number of edges) of a minimal path between them.
Given a vertex $v\in V(G)$, the \textit{degree} of $v$ is number $\text{deg}(v)$ of edges in $E(G)$ incident to $v$, and the maximum vertex degree in the graph is denoted as $\Delta(G):=\max_{v\in V(G)}\text{deg}(v)$.
A vertex of degree 1 in $G$ is referred to as a \textit{leaf}.

Given graphs $H$ and $G$, $H$ is said to be a \textit{subgraph} of $G$ if $V(H)\subseteq V(G)$ and $E(H)\subseteq E(G)$. Given a subset of vertices $S\subseteq V(G)$, an \textit{induced subgraph} $G(S)$ of $G$ is the graph with vertex set $S$ and edge set consisting of all edges in $E(G)$ between any two vertices in $S$. An \textit{edge-induced subgraph} can be defined in a similar way.
For a graph $G$ with vertex $v\in V(G)$, the \textit{neighborhood} of $v$ is the set of all adjacent vertices: $N(v)=\{u:(u,v)\in E(G)\}$.
Let $G(v)$ denote the subgraph induced by $N(v)$.

In some cases involving local complements, it will be useful to use an operation known as an \textit{edge pivot}, which is a composition of three local complements defined by an edge in the graph. For a graph $G$ with an edge $e=(u,v)\in E(G)$, an edge pivot with respect to $(u,v)$ is defined by the formula
\begin{eqnarray}
\text{ep}_{(u,v)}&=&c_u\circ c_v\circ c_u.
\end{eqnarray}
Note that $\text{ep}_{(u,v)}(G)=\text{ep}_{(v,u)}(G)$; the transformation is the same when switching the roles of $u$ and $v$, and hence without loss of generality, this operation is defined by an undirected edge.
In the graph state case, an edge pivot defines the three corresponding local Clifford operations applied in sequence.

\section{Special Graph Classes}
\label{app:special_graph_classes}

A graph is a \textit{star} if it consists of a single central vertex connected to some number of leaves; $S_n$ denotes a star with $n$ leaves ("spokes" of the star).
A graph is \textit{complete} if it contains all possible edges between vertices; the complete graph on $n$ vertices is denoted $K_n$.
A graph $G$ is \textit{bipartite} if its vertex set can be partitioned into two sets $V(G)=V_1\sqcup V_2$ such that there do not exist edges between two vertices in the same set. A graph is called \textit{complete bipartite} if it is bipartite and there exist all possible edges between vertices in the sets $V_1$ and $V_2$. $K_{n,m}$ denotes the complete bipartite graph between disjoint sets of $n$ and $m$ vertices.
More generally, $G$ is called \textit{multipartite} or \textit{$k$-partite} if its vertex set can be partitioned into $V(G)=V_1\sqcup\cdots\sqcup V_k$ such that there exist no edges between vertices in the same $V_i$. Likewise, a graph is \textit{complete $k$-partite} if it is $k$-partite and contains all possible edges. We will use $K_{n_1,\cdots,n_k}$ to denote a complete $k$-partite graph, where $V(K_{n_1,\cdots,n_k})=V_1\sqcup\cdots\sqcup V_k$ with $|V_i|=n_i$.

Given a graph $G$, a \textit{clique} in $G$ is a complete subgraph.
A \textit{clique-star} is a graph which can be partitioned into $k$ cliques $H_1,\cdots,H_k$ with the property that there exists one central clique $H_r$ whose vertices are fully connected to the vertices in each of the other cliques $H_{i\neq r}$, but there exist no edges between the vertices from any other two cliques $H_{i\neq r}$ and $H_{j\neq r}$. A clique-star consisting of cliques $H_1,\cdots,H_k$ and with central clique $H_r$ is denoted $CS^r_{n_1,\cdots,n_k}$, where $n_i=|V(H_i)|$.

A \textit{repeater graph}~\cite{azuma2015all} $R_n$ consists of clique with $n$ vertices, each of which has a single leaf attached to it.
A \textit{multi-leaf repeater graph} $MR_{n_1,\cdots,n_k}$ is a generalization of a repeater consisting of a clique with $k$ vertices enumerated $1,\cdots,k$, where the $i^{\text{th}}$ vertex has $n_i-1$ leaves attached to it.

\section{Vertex Minors and Measurements}
\label{app:vertex_minors_and_measurements}

Closely related to graph local equivalence is the notion of vertex minors. Given graphs $H$ and $G$, $H$ is said to be a \textit{vertex minor} of $G$ if it can be obtained via some sequence of local complements and vertex deletions~\cite{oum2005rank}.
For graph states, a vertex deletion occurs whenever a qubit is measured, along with some local Clifford operation depending on the basis of the measurement.
Hence, any graph state obtainable by some combination of local Clifford operations and Pauli measurements will correspond to a vertex minor of the original graph state.
In general, the complexity of determining whether one graph is a vertex minor of another is NP-complete~\cite{dahlberg2022complexity}.

The effect of local Pauli measurements with respect to a choice of qubit in a graph state can also be be expressed in terms of local complements, with explicit formulas given in~\cite{hein2004multiparty,dahlberg2018transforming}.
For a graph state $|G\rangle$ and qubit corresponding to $v\in V(G)$, let $P_v^{(X,\pm)}$, $P_v^{(Z,\pm)}$, and $P_v^{(Y,\pm)}$ denote the $X$, $Y$, and $Z$ Pauli projectors on this qubit. A local Pauli measurement on $|G\rangle$ with respect to $v$ is then given by
\begin{eqnarray}
P_v^{(Z,\pm)}&=&\tfrac{1}{2}|Z,\pm\rangle_v\otimes U_v^{(Z,\pm)}|G\setminus v\rangle\label{eq:PZ},\\
P_v^{(Y,\pm)}&=&\tfrac{1}{2}|Y,\pm\rangle_v\otimes U_v^{(Y,\pm)}|c_v(G)\setminus v\rangle\label{eq:PY},\\
P_v^{(X,\pm)}&=&\tfrac{1}{2}|X,\pm\rangle_v\otimes U_v^{(X,\pm)}|\text{ep}_{(w,v)}(G)\setminus v\rangle\nonumber.\\\label{eq:PX}
\end{eqnarray}
The third equation above is defined with respect to an edge pivot $\text{ep}_{(w,v)}$ for any choice of $w\in N_G(v)$.
Note that each of the graph states $|G\setminus v\rangle$, $|c_v(G)\setminus v\rangle$, and $|\text{ep}_{(w,v)}(G)\setminus v\rangle$ obtained from Equations \ref{eq:PZ}, \ref{eq:PY}, and \ref{eq:PX} correspond to vertex minors of $G$.
The operations $U_v^{(Z,\pm)}$, $U_v^{(Y,\pm)}$, and $U_v^{(Z,\pm)}$ are defined in~\cite{hein2004multiparty} as follows:
\begin{eqnarray}
U_v^{(Z,+)}&=&\text{id}_v;\\
U_v^{(Z,-)}&=&\prod_{u\in N(v)}Z_u;\\
U_v^{(Y,+)}&=&\prod_{u\in N(v)}\sqrt{-iZ_u};\\
U_v^{(Y,-)}&=&\prod_{u\in N(v)}\sqrt{iZ_u};\\
U_v^{(X,+)}&=&\sqrt{iY_w}\prod_{u\in N(v)\setminus N(w)\setminus w}Z_u;\\
U_v^{(X,-)}&=&\sqrt{-iY_w}\prod_{u\in N(w)\setminus N(v)\setminus v}Z_u.
\end{eqnarray}

\begin{figure}[t]
\centering
\includegraphics[width=\linewidth,page=19]{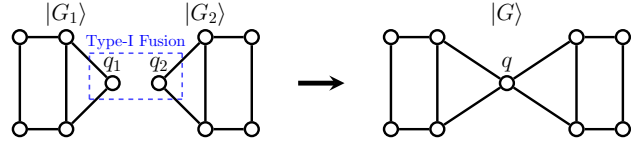}
\caption{
An example of a Type-I fusion between qubits from the same two graph states of Figure~\ref{fig:type-II_fusion}.
}
\label{fig:type-I_fusion}
\end{figure}

Although we use only Type-II fusions in this paper, Type-I fusions between graph states are also possible~\cite{browne2005resource}. Under this operation, qubits $q_1$ and $q_2$ from graph states $|G_1\rangle$ and $|G_2\rangle$ are combined into a single qubit $q$, inheriting all the neighbors of both $q_1$ and $q_2$.
What results is a single graph state $|G\rangle$, as shown in Figure~\ref{fig:type-II_fusion}.

\section{Formulas for Complete-Multipartie Graphs and Clique-Stars}
\label{app:formulas_for_CM_and_CS}

In~\cite{connolly2026local}, we derive a number of explicit formulas relating to complete-multipartite graphs and clique-stars.
The values in Table~\ref{tab:optimal_vs_split-fuse_comparison} are based on these formulas, but we excluded them from the main text due to their length.
These formulas include the size of the LC orbit, the number of edges in a minimal-edge representative of this orbit, and the maximum vertex degree $\Delta(G)$ in a minimal $\Delta(G)$ representative of this orbit. The latter two types of formulas fall into three cases each depending on the values of $k,n_1,\cdots,n_k$, the details of which are explained in~\cite{connolly2026local}, but for which use $n_j=\min\{n_1,\cdots,n_k\}$, $n_t=\min\{n_1,\cdots,n_k\}\setminus\{n_j\}$, and $n_\ell=\max\{n_1,\cdots,n_k\}$.

\begin{widetext}
\begin{eqnarray}
|{\mathcal O}(K_{n_1,\cdots,n_k})|&=&\underbrace{\prod_{i=1}^k(n_i+1)}_{\text{even products}}+\sum_{j=1}^k\left(\prod_{i=1\neq j}^k(n_i+1)\right)\label{eq:CM_orbit_size}\\
|{\mathcal O(CS^r_{n_1,\cdots,n_k})|}&=&\underbrace{\prod_{i=1}^k(n_i+1)}_{\text{odd products}}+\sum_{j=1}^k\left(\prod_{i=1\neq j}^k(n_i+1)\right)\label{eq:CS_orbit_size}\\
\min_{G\cong_{\text{LC}}K_{n_1,\cdots,n_k}}|E(G)|&=&\begin{cases}\frac{k(k-1)}{2}+\sum_{i=1}^k(n_i-1)&\text{(Case 1)}\\n_j(k-1)+\frac{n_j(n_j-1)}{2}+\sum_{i=1\neq j}^k(n_i-1)&\text{(Case 2)}\\n_j(k-1)+\sum_{i=1\neq j}^k(n_i-1)&\text{(Case 3)}\end{cases}\label{eq:CM_min_edges}\\
\min_{G\cong_{\text{LC}}CS^r_{n_1,\cdots,n_k}}|E(G)|&=&\begin{cases}n_j(k-1)+\sum_{i=1\neq j}^k(n_i-1)&\text{(Case 1)}\\\frac{k(k-1)}{2}+\sum_{i=1}^k(n_i-1)&\text{(Case 2)}\\n_j(k-1)+\frac{n_j(n_j-1)}{2}+\sum_{i=1\neq j}^k(n_i-1)&\text{(Case 3)}\end{cases}\label{eq:CS_min_edges}\\
\min_{G\cong_{\text{LC}}K_{n_1,\cdots,n_k}}\Delta G&=&\begin{cases}\min\begin{cases}n_\ell+k-2&\text{(Case 1)}\\\max\{(k-1+n_t),(n_\ell-1+n_j)\}&\text{(Case 2)}\\\max\{(n_j+k-2),(n_\ell-1+n_j)\}&\text{(Case 3)}\end{cases}&k\text{ is even}\\\min\begin{cases}n_\ell+n_j+k-3&\text{(Case 1)}\\\max\{(k-1),(n_\ell-1+n_j)\}&\text{(Case 2)}\\\max\{(n_j+n_t+k-3),(n_\ell-1+n_j)\}&\text{(Case 3)}\end{cases}&k\text{ is odd}\end{cases}\label{eq:CM_min_deltaG}\\
\min_{G\cong_{\text{LC}}CS^r_{n_1,\cdots,n_k}}\Delta G&=&\begin{cases}\min\begin{cases}n_\ell+n_j+k-3&\text{(Case 1)}\\\max\{(k-1),(n_\ell-1+n_j)\}&\text{(Case 2)}\\\max\{(n_j+n_t+k-3),(n_\ell-1+n_j)\}&\text{(Case 3)}\end{cases}&k\text{ is even}\\\min\begin{cases}n_\ell+k-2&\text{(Case 1)}\\\max\{(k-1+n_t),(n_\ell-1+n_j)\}&\text{(Case 2)}\\\max\{(n_j+k-2),(n_\ell-1+n_j)\}&\text{(Case 3)}\end{cases}&k\text{ is odd}\end{cases}\label{eq:CS_min_deltaG}
\end{eqnarray}
\end{widetext}

Equations~\ref{eq:CM_orbit_size} and \ref{eq:CS_orbit_size} are stated with respect to \textit{even} and \textit{odd} products in $\prod_{i=1}^k(n_i+1)$. By this, we mean the number of terms in each of the summands of the fully expanded form, such as $n_1n_2$ or $n_1n_4n_5$.

If we assume that $k,n_1,\cdots,n_k=O(n)$ are all on the order of $n$, observe that the leading terms in the formulas for the minimum number of edges are quadratic. This is in contrast to the number of CZ gates required by the split-fuse method as stated in Equation~\ref{eq:split_fuse_CZ}, which scales linearly with respect to $n$.
Hence, this should outperform a choice of optimal graph state in general as the number of qubits increases.

\section{Proof of Proposition~\ref{thm:reconstruction}}
\label{app:proof}

\begin{proof}
Let $G$ be any graph with a split decomposition described by $\textit{QASST}(G)=(Q_1,\cdots,Q_k)$.
Suppose we prepare intermediate graph states $|Q_1\rangle,\cdots,|Q_k\rangle$ corresponding to each of the quotient graphs in the QASST.
Our goal is to show how the target graph state $|G\rangle$ can be assembled from these intermediate graph states using Type-II fusions.

The edges of the underlying tree structure of $\textit{QASST}(G)$ are in bijection with the strong splits of $G$. Each pair of split-nodes joined by an edge in the QASST represents a complete bipartite subgraph of $G$, with the neighbors of each split-node corresponding to one half of the bipartition.
If $Q_i$ and $Q_j$ are quotient graphs adjacent in the QASST by a pair of split-nodes $s_i^j$ and $s_j^i$, a Type-II fusion between the qubits in $|Q_i\rangle$ and $|Q_j\rangle$ corresponding to $s_i^j$ and $s_j^i$ fully connects the neighbors of $s_i^j$ in $|Q_i\rangle$ with the neighbors of $s_j^i$ in $|Q_j\rangle$, removing $s_i^j$ and $s_j^i$ in the process but affecting no other connections.
Effectively, this operation rebuilds a complete bipartite subgraph, undoing the collapse introduced by the split decomposition.
The result is a new intermediate graph state $|Q_i'\rangle$ consisting of the remaining qubits from $|Q_i\rangle$ and $|Q_j\rangle$.

This process can be repeated for each edge in the QASST representing a connection between neighboring quotient graphs; each time, two intermediate graph states are fused into a larger intermediate graph state, consuming the pair of split-node qubits matching an edge in the QASST.
The procedure terminates when all pairs of split-nodes are consumed, exhausting all of the edges in $\textit{QASST}(G)$.
This occurs when all intermediate graph states have been fused into a single graph state corresponding to $|G\rangle$.
This process exactly parallels the recursive computation of the split decomposition of $G$ in reverse; replacing a pair of adjacent split-nodes in $\textit{QASST}(G)$ by all-to-all connections between neighbors is the inverse of collapsing a strong split in $G$ into a pair of connected split-nodes.
Furthermore, the order of the reconstruction is unimportant; the final resulting graph is uniquely determined by the split decomposition.
\end{proof}

\end{document}